\newtheorem{theorem}{Theorem}
\newtheorem{lemma}{Lemma}
\newtheorem{remark}{Remark}
\newtheorem{assumption}{Assumption}
\def\BibTeX{{\rm B\kern-.05em{\sc i\kern-.025em b}\kern-.08em
		T\kern-.1667em\lower.7ex\hbox{E}\kern-.125emX}}
\begin{document}
	\title{Adaptive Tracking Control with Binary-Valued Output Observations}
	\author{Lantian Zhang and Lei Guo, \IEEEmembership{Fellow, IEEE}
		\thanks{This work was supported by the National Natural Science Foundation of China under Grant No. 12288201.}
		\thanks{Lantian Zhang and Lei Guo are with the Key Laboratory of Systems and Control, Academy of Mathematics and Systems Science, Chinese Academy of Sciences, Beijing 100190, China, and also with the School of Mathematical Science, University of Chinese Academy of Sciences, Beijing 100049, China. (e-mails: zhanglantian@amss.ac.cn, Lguo@amss.ac.cn). }}
	
	\maketitle
	
	\begin{abstract}
This paper considers real-time control and learning problems for finite-dimensional linear systems under binary-valued and randomly disturbed output observations. This has long been regarded as an open problem because the exact values of the traditional regression vectors used in the construction of adaptive algorithms are unavailable, as one only has binary-valued output information. To overcome this difficulty, we consider the adaptive estimation problem of the corresponding infinite-impulse-response (IIR) dynamical systems, and apply the double array martingale theory that has not been previously used in adaptive control. This enables us to establish global convergence results for both the adaptive prediction regret and the parameter estimation error, without resorting to such stringent data conditions as persistent excitation and bounded system signals that have been used in almost all existing related literature.  Based on this, an adaptive control law will be designed that can effectively combine adaptive learning and feedback control. Finally, we are able to show that the closed-loop adaptive control system is optimal in the sense that the long-run average tracking error is minimized almost surely for any given bounded reference signals. To the best of the authors' knowledge, this appears to be the first adaptive control result for general linear systems with general binary sensors and arbitrarily given bounded reference signals.
%
%
%
	\end{abstract}
	
	\begin{IEEEkeywords}
		Stochastic systems, adaptive control, binary-valued observations,  double array martingales, adaptive identification, convergence analysis.
	\end{IEEEkeywords}
	
		\section{Introduction} 

Binary output observation data, which refers to system output observations that can only take the values 0 or 1, is prevalent in practical systems.
One example comes from the signal detection problem in wireless communication \cite{rw2000,carli2010}, where signals are limited to 1-bit via a low-precision analog-to-digital converter, due to constraints in energy consumption and hardware complexity.  Another example is the binary classification problem in machine learning \cite{ba2014, mei2018}, where labels provide only discrete class information, rather than precise or continuous values. Furthermore, binary sensors have also been widely used in engineering applications, such as gas content sensors (CO2, H2, etc.) in the gas and oil industry \cite{sj2004}, and shift-by-wire switch sensors in automotive applications \cite{ls2001, ZG2003}.

Given their importance, extensive theoretical research has been devoted to estimation and control problems under such binary or quantized observations. 
Various parameter identification methods have been proposed for estimating unknown parameters using binary/quantized observation information, including the empirical measure approaches \cite{ZG2003}, maximum likelihood algorithms \cite{godoy}, set-membership methods \cite{MC2011}, stochastic approximation-type algorithms \cite{GZ2013, you2015}, and stochastic Newton algorithms \cite{ZZ2022, bercu}. Meanwhile, for the control problems under binary/quantized observations, numerous studies devoted to problems such as stabilization with quantized feedback observations (see, e.g.,\cite{nair2004, liberzon2014}) and consensus under quantized communications \cite{kash2007, yanlong2019}, under the assumption that the model parameters are known {\it a  prior}.
However, in many practical control scenarios, the models of systems are uncertain, one may require simultaneous parameter learning and real-time control. This basic adaptive control problem has rarely been explored in the existing literature, because it turns out that such a problem is highly non-trivial in both controller design and theoretical analysis due to the non-availability of the exact output observations. This is the primary motivation for us to initiate an investigation, by introducing new methods and establishing new results for the adaptive control of a basic class of finite-dimensional uncertain linear systems under binary output observations. 

For that purpose, one needs to build on some significant milestones of the existing adaptive control theory. In fact, over the past half a century, much progress has been made in the area of adaptive control. A natural approach in the design of adaptive control is the certainty equivalence principle, which involves two steps: (1) designing an online estimation algorithm using the observation data to estimate the unknown parameters, and (2) using these online estimated parameters to design the adaptive controller in place of the unknown true parameters. 
It is well-known that the nonlinear and complex nature of such closed-loop adaptive systems makes the corresponding theoretical analysis quite challenging and thus is regarded as a central issue in adaptive control. Take the basic adaptive control problem of linear stochastic systems as an example, motivated by the need to establish a rigorous theory for the least-squares(LS)-based self-turning regulators proposed by \AA str\"{o}m and Wittenmark \cite{str}, a great deal of effort had been devoted to the convergence study of LS with stochastic feedback signals resulting from stochastic adaptive control (\cite{ljung1976, moore1978, chris, lw1982, chen1982, goodwin, lai1986, kumar, g1991, g1995, cg1991}). Among the many significant contributions in this direction, we mention that Lai and Wei \cite{lw1982} established the asymptotic analysis of LS under the weakest possible excitation condition on the system stochastic signals, by using stochastic Lyapunov function methods and martingale convergence theorems. The convergence rates of either adaptive tracking control \cite{lai1986} or adaptive LQG control \cite{g1987} were established by resorting to certain kinds of external excitation signals for open-loop stable systems. The complete convergence theories of both the adaptive tracking control and adaptive LQG control were established later in Guo and Chen \cite{g1991, g1995}, Guo \cite{g1996}, and \cite{g1999}. These turn out to be useful foundations for the study of adaptive control under binary-valued observations in the current paper.

The aforementioned work all focuses on adaptive control problems under traditional continuous-valued output observations,  which cannot be directly applied to the current case of binary-valued output observations. In fact, even the adaptive parameter estimation theory under binary-valued output observations has not been fully established for general linear systems, let alone the investigation of the related adaptive control problems. This is primarily due to the following two reasons. Firstly, the regression vectors traditionally used for the design of adaptive learning algorithms are not available in the current scenario due to the availability of binary-valued observation information only. Consequently, almost all theoretical results on parameter estimation under binary output observations are limited to finite-impulse-response (FIR) systems with no poles in the linear models. Secondly, due to the nonlinearity of the current binary observation function, almost all existing adaptive estimation theories require stringent data conditions, such as independent and stationary conditions, deterministic persistent excitation (PE) conditions, and boundedness signal conditions, which are difficult or even impossible to verify for the closed-loop control system signals. For these reasons, only a few works have been devoted to adaptive control theory under quantized/binary-valued observations.  Guo et al. \cite{yanlong2011} considered the adaptive tracking problem for  FIR systems with a scalar unknown parameter. Zhao et al. \cite{yanlong2013} further investigated adaptive tracking of FIR systems under binary-valued observations for periodic tracking signals with full-rank conditions. In these two scenarios, it is possible to verify the PE and boundedness conditions required for parameter identification. However, such verification is challenging for infinite-impulse-response (IIR) systems and general reference signals.
Furthermore, Zhao et al.\cite{wenxiao2023} studied the adaptive regulation problem by using a stochastic approximation-based control algorithm, where the tracking signal is a fixed setpoint and the binary observation thresholds are specifically designed and depend on the given fixed setpoint. In summary, the adaptive control theory for general linear systems with both general binary sensors and time-varying reference signals remains a challenging research problem.

Partly motivated by the above open problem, we have established an adaptive estimation theory under binary-valued output observations and non-PE data conditions in our recent works \cite{ZZ2022, zztsqn}, but these results cannot be used to directly solve the abovementioned basic adaptive control problems, mainly due to the following two reasons: 1) the regression vector contains non-available exact output signals, and 2) the boundedness assumption of the input signals imposed in \cite{ZZ2022} is hard to verify for closed-loop control systems. To sidestep the difficulty of using the traditional regression vectors, we use the transfer function to transform the current system description into an IIR system description. Then, to establish the adaptive control theory of the current IIR system, we introduce the double array martingale theory developed by one of the authors in \cite{g1990} for the adaptive estimation theory of the $ARX(\infty)$ model, thereby establishing a convergence theory for parameter estimation without the need of PE and boundedness system signal conditions. Furthermore, in order to verify that the closed-loop system signals satisfy the required unbounded growth condition for adaptive estimation, we design an adaptive control by using a switching control technique depending on a certain level of signal excitation. All the above will make it possible for us to establish the stability and optimality of the closed-loop adaptive tracking control systems under binary-valued output observations. To the best of the authors' knowledge, this appears to be the first such result for adaptive control under binary output observations for general tracking signals.

The remainder of this paper is organized as follows. Section \ref{se2} formulates the problem and states the basic assumptions. Section \ref{se3} gives the main results of this paper, including the proposed estimation algorithms, the adaptive control laws, and the main theorems. Section \ref{se4} provides the proofs of the main results along with some key lemmas. Numerical examples are provided in Section \ref{se5}. Finally, we conclude the paper with some remarks.

\subsubsection*{Notations}  By $\|\cdot\|$, we denote the Euclidean norm of vectors or matrices. The maximum and minimum eigenvalues of a matrix $M$ are denoted by $\lambda_{max}\left\{M\right\}$ and $\lambda_{min}\left\{M\right\}$  respectively. Besides, by $det(M)$  we mean the determinant of the matrix $M$. Moreover, $\left\{\mathcal{F}_{k},k\geq 0\right\}$ is the sequence of $\sigma -$algebra together  with that of conditional mathematical expectation operator $\mathbb{E}[\cdot \mid \mathcal{F}_{k}]$.

\section{Problem formulation}\label{se2}	
Consider the following finite-dimensional linear systems with binary-valued output observations:
\begin{equation}\label{eq1}
\left\{
\begin{aligned}
		  &A(z)y_{n+1}=B(z)u_{n}\\
		&s_{n+1}=I\left(y_{n+1}+w_{n+1}>c_{n}\right)           
		\end{aligned} \right., \;\;\;\;\;n\geq 0,
\end{equation}
where $y_{n}, u_{n}, s_{n}$ and $w_{n}$ are the system output, input ($u_{k}=0, \forall k<0$), output observation and noise, respectively. $A(z)$ and $B(z)$ are polynomials in the backward-shift operator $z$:
\begin{equation}
\begin{aligned}
A(z)=&1+a_{1}z+\cdots+a_{p}z^{p},\;\;p\geq 0,\\
B(z)=&b_{1}+b_{2}z+\cdots+b_{q}z^{q-1},\;\;q\geq 1.
\end{aligned}
\end{equation}
Unlike in traditional precise observation scenarios, the system output cannot be obtained; instead, only two values can be observed. $I\left(\cdot\right)$ is the indicator function, and $\{c_{n}\} $ denotes a given threshold sequence.

The problem considered here is, at any time instant $n\geq 0$, to design a feedback control $u_{n}$ based only on the past binary-valued output observations $\{s_{0}, s_{1}, \cdots, s_{n}\}$ and the known reference signal $y_{n+1}^{*}$, such that the following averaged tracking error between the unknown system output $y_{n+1}$ and the reference signal $y_{n+1}^{*}$ can asymptotically achieve its minimum:
\begin{equation}
J_{n}=\frac{1}{n}\sum_{i=1}^{n}\left(y_{i}-y_{i}^{*}\right)^{2}.
\end{equation}

To proceed with further discussions, we need the following assumptions.

\begin{assumption}\label{ass3} Both $A(z)$ and $B(z)$ are stable polynomials, that is $A(z)\not=0, B(z)\not=0, \forall |z|\leq 1.$
\end{assumption}

\begin{assumption}\label{ass4}
The reference signal $\{y_{k+1}^{*}, \mathcal{F}_{k}\}$ is a known bounded adapted sequence (where $\{\mathcal{F}_{k}\}$ is a sequence of nondecreasing $\sigma-$algbras).
\end{assumption}

\begin{assumption}\label{ass2}
The threshold $\{c_{k},\mathcal{F}_{k}\}$ is a known bounded adapted sequence, with $\|c_{k}\|\leq C$, where $C$ is a positive constant.
\end{assumption}

\begin{assumption}\label{ass1}
The noise $w_{k}$ is $\mathcal{F}_{k}-$measurable. For any $k\geq 1$, the noise $w_{k}$ is independent with $\mathcal{F}_{k-1}$, and its probability density function, denoted by $f_{k}(\cdot)$, is known and satisfies
		\begin{equation}\label{66}
			\begin{aligned}
				\inf_{k\geq 1, x\in \mathfrak{X}} \left\{f_{k}(x)\right\}> 0,
			\end{aligned}
		\end{equation}
for any bounded set $\mathfrak{X}\subset \mathbb{R}.$		\end{assumption}

\begin{remark}
It can be easily seen that if the noise sequence $\{w_{k}, k\geq 1\}$ follows an i.i.d Gaussian distribution as assumed previously (see, e.g.,\cite{godoy, GZ2013}), then (\ref{66}) holds. Assumption \ref{ass1} has been widely used in previous theoretical studies on adaptive identification under binary-valued output observations with given thresholds (see, e.g., \cite{ZG2003, ZZ2022, zztsqn}). 
\end{remark}

It is obvious that the minimum of $\lim\limits_{n\to \infty} J_{n}$ is $0$, and the corresponding optimal control law satisfies 
\begin{equation}\label{e5555}
y_{k+1}^{*}=\mathbb{E}\left[y_{k+1}\mid \mathcal{F}_{k}\right]=\varphi_{n}^{\top}\vartheta,
\end{equation}
 where $\vartheta$ is the unknown parameter defined by
\begin{equation}
\vartheta=\left[-a_{1}, \cdots, -a_{p}, b_{1},\cdots,b_{q}\right]^{\top},
\end{equation}
and $\varphi_{n}$ is the corresponding regressor
\begin{equation}\label{e6}
\varphi_{n}=\left[y_{n}, \cdots, y_{n-p+1}, u_{n}, \cdots, u_{n-q+1}\right]^{\top}.
\end{equation}
In traditional scenarios where the parameter $\vartheta$ is unknown but the outputs $\{y_{k}\}$  are available, a natural approach is first to provide an estimate $\hat{\vartheta}_{k}$ for the unknown parameter $\vartheta$ at each time instant $k$ based on the past measurements $\{y_{i}, \varphi_{i-1}, 1\leq i \leq k\}$, and then design the adaptive control by solving the equation 
\begin{equation}\label{ex7}
y_{k+1}^{*}=\varphi_{k}^{\top}\hat{\vartheta}_{k}. 
\end{equation}
However, in the current scenario where the system outputs $\{y_{k}\}$ are unavailable, since the regression vector $\{\varphi_{k}\}$ cannot be obtained, it turns out to be impossible to estimate the unknown parameter $\vartheta$ using $\{\varphi_{k}\}$ directly and to design the adaptive control law by solving equation $(\ref{ex7})$ directly. 
To overcome this difficulty, we consider the transfer function:
\begin{equation}\label{eE8}
G(z)=A^{-1}(z)B(z)=\sum_{i=1}^{\infty}G_{i}z^{i-1},
\end{equation}
where 
\begin{equation}\label{pro}
G_{1}=b_{1}, \;\; \|G_{i}\|=O\left(e^{-ri}\right), r>0, i>1.
\end{equation}
The system can then be transformed into an IIR system as follows:
\begin{equation}\label{e88}
y_{n+1}=\sum_{i=1}^{\infty}G_{i}u_{n-i+1}.
\end{equation}
For the system described by $(\ref{e88})$, we introduce the parameter vector
\begin{equation}\label{e11}
\theta=[G_{1}, G_{2},\cdots]^{\top}\in \mathbb{R}^{\infty},
\end{equation}
and the corresponding regression vector
\begin{equation}
\phi_{n}=\left[u_{n}, u_{n-1},\cdots,\right]^{\top}\in \mathbb{R}^{\infty},
\end{equation}
so that $(\ref{eq1})$ can be rewritten as
\begin{equation}\label{e13}
\left\{
\begin{aligned}
		  &y_{n+1}=\phi_{n}^{\top}\theta\\
		&s_{n+1}=I\left(y_{n+1}+w_{n+1}>c_{n}\right)           
		\end{aligned} \right., \;\;\;\;\;n\geq 0.
\end{equation}
According to $(\ref{e13})$, the equation $(\ref{e5555})$ that the optimal control satisfies can be rewritten in the following form:
\begin{equation}\label{opp}
y_{n+1}^{*}=\mathbb{E}\left[y_{n+1}\mid \mathcal{F}_{n}\right]=\phi_{n}^{\top}\theta.
\end{equation}
Based on $(\ref{opp})$, the optimal control sequence $\{u_{n}^{0}\}$ can then be explicitly expressed as
\begin{equation}\label{e14}
u_{n}^{0}=\frac{1}{G_{1}}\left(y_{n+1}^{*}-\sum_{i=2}^{\infty}G_{i}u_{n-i+1}^{0}\right).
\end{equation}
Since the parameters $\{G_{i}\}$ are unknown, a natural problem is how to design an estimation algorithm for this infinite-dimensional parameter vector by using binary-valued output observations only, which is needed in the design of an adaptive feedback control law. This issue will be discussed in the subsequent sections.

\section{Main results}\label{se3}	

In this section, we first introduce an adaptive estimation algorithm for the unknown parameter vector $\theta$. Compared to existing algorithms, a notable characteristic of the new algorithm is that we can establish its asymptotic properties without requiring any boundedness conditions and non-PE excitation conditions of the regression vectors. Furthermore, based on this estimation algorithm, we design a switching adaptive control and establish its global stability and asymptotic optimality.

\subsection{Adaptive parameter estimation}

To estimate the parameter vector in the current ``large model", we introduce sequences of vectors with progressively increasing dimensions.
Define the dimension sequence $\{p_{n}, n\geq 1\}$ as follows:
\begin{equation}\label{pn}
p_{n}=\lfloor\log^{a}n\rfloor,\;\; a>1, 
\end{equation}
where $\lfloor x\rfloor$ denotes the integer part of $x$. Correspondingly, we define a sequence of parameter vectors $\{\theta(n), n\geq 1\}$ with dimensions growing at a rate determined by $p_{n}$:
\begin{equation}\label{e17}
\theta(n)=\left[G_{1}, G_{2}, \cdots, G_{p_{n}}\right]^{\top},
\end{equation}
where $\{G_{i}, i\geq 1\}$ are defined in $(\ref{e11})$, and also define the sequence of regression vectors $\{\phi_{k}(n), 1\leq k\leq n\}$ as:
\begin{equation}
\phi_{k}(n)=\left[u_{k}, u_{k-1}, \cdots, u_{k-p_{n}+1}\right]^{\top}.
\end{equation}

Moreover, to constrain the growth rate of parameter estimation, we introduce a projection operator into the algorithm. For this, we first introduce a function $g(t): [0, \infty)\rightarrow \mathbb{R}$  defined as follows: 
\begin{equation}\label{e7}
\begin{aligned}
g(t)=\inf\limits_{k\geq 1}\inf\limits_{\|x\|\leq t+C}f_{k+1}(x),\;\; t\geq 0.
 \end{aligned}
 \end{equation}
Then it follows that the function $g(t)$ is non-increasing and satisfies $g(t)>0, \forall t\geq 0$ by Assumption \ref{ass1}. 
Consequently, we can define the inverse function $h(\cdot)$ of $g(\cdot)$ as follows:
\begin{equation}\label{e8}
\begin{aligned}
h(t)=\inf\left\{s\geq 0: g(s)=t\right\}, \;\;0<t\leq g(0),
\end{aligned}
\end{equation}
so that $h(t)$ is also a non-increasing function. 

Now, for each $1\leq k \leq n$, define the projection region as follows:
\begin{equation}
D_{k}(n)=\left\{x\in \mathbb{R}^{p_{n}}: \|x\|_{1}\leq d_{k}\right\},
\end{equation}
where $\| \cdot\|_{1}$ is the $L_{1}$ norm, the upper bound $d_{k}$ is defined by 
\begin{equation}\label{e21}
d_{k}=\sqrt{h\left(\frac{g(0)}{\alpha_k}\right)+1},
\end{equation}
and $\{\alpha_{k}, k\geq 1\}$ is any positive increasing sequence satisfying
\begin{equation}\label{22}
\alpha_{k}\to \infty,\;\;\;\alpha_{k}^{2}=o\left(\sqrt{k}\right),\;\;a.s.,\;\; k\to \infty.
\end{equation}
The adaptive parameter estimates are given by Algorithm \ref{alg1}.
\begin{algorithm}
\caption{} 
\label{alg1}
For each $n\geq 1$, the parameter estimates are recursively defined for $0\leq k\leq {n-1}$ as follows: 
\begin{equation}\label{9}
\begin{aligned}
			\theta_{k+1}(n)&=\Pi_{P_{k+1}^{-1}(n)}^{D_{k}(n)}\left\{\theta_{k}(n)+a_{k}(n)\beta_{k}P_{k}(n)\phi_{k}(n)e_{k+1}(n)\right\},\\
						e_{k+1}(n)&=s_{k+1}-1+F_{k+1}(c_{k}-\phi_{k}^{\top}(n)\theta_{k}(n)),\\
			P_{k+1}(n)&=P_{k}(n)-\beta_{k}^{2}a_{k}(n)P_{k}(n)\phi_{k}(n)\phi_{k}^{\top}(n)P_{k}(n),\\
			a_{k}(n)&=\frac{1}{1+\beta_{k}^{2}\phi_{k}^{\top}(n)P_{k}(n)\phi_{k}(n)},\\
			\beta_{k}&=\inf_{|x|\leq d_{k}\max\limits_{1\leq i \leq k}\left|u_{i}\right|+C}\{f_{k+1}(x)\},
			\end{aligned}
		\end{equation}
		where $F_{k+1}(\cdot)$ is the probability distribution function of $w_{k+1}$; $d_{k}$ and C are defined as in $(\ref{e21})$ and Assumption \ref{ass2}; $\Pi_{P_{k+1}^{-1}}^{D_{k}(n)}\{\cdot\}$ is the projection operator defined via a time-varying Frobenius norm as follows:
	\begin{equation}	
		 \Pi_{P_{k+1}^{-1}(n)}^{D_{k}(n)}\{x\}=\mathop{\arg\min}_{\omega\in D_{k}(n)}(x-\omega)^{\top}P_{k+1}^{-1}(n)(x-\omega), \forall x \in \mathbb{R}^{p_{n}},
		 \end{equation}	
		 the initial values $\theta_{0}(n)$ and $P_{0}(n)>0$ can be chosen arbitrarily.	 
\end{algorithm}

\begin{remark}
Algorithm \ref{alg1} is inspired by the estimation algorithms for the $ARX(\infty)$ model in \cite{g1990} and \cite{g2023}. Its recursive form is similar to the recursive least squares algorithm for linear stochastic models, with the main difference being the design of the update term $e_{k+1}$, which utilizes only binary observation information $s_{k+1}$.
Compared to the estimation algorithm with binary-valued observations in \cite{g2023}, the current algorithm differs in two key aspects: Firstly, it does not require prior knowledge of the bounded convex set containing the parameter $\theta$, which is not easily obtainable for the current model. Secondly, in designing the scalar gain $\beta_{k}$,  it does not rely on the prior assumption of an upper bound of the system input, which is crucial for establishing adaptive control theory later on.
\end{remark}

Based on the adaptive estimates of the unknown parameter, for each $k\geq 1$, we can define the following adaptive predictor for the output:
\begin{equation}
\hat{y}_{k+1}=\phi_{k}(k)^{\top}\theta_{k}(k),
\end{equation}
where $\theta_{k}(k)$ is given by Algorithm \ref{alg1}. Here, the difference between the real output $y_{k+1}$ and the adaptive predictor $\hat{y}_{k+1}$ is referred to as regret $R_{k}$, which is defined as follows:
\begin{equation}\label{rr}
R_{k}=\left[y_{k+1}-\phi_{k}(k)^{\top}\theta_{k}(k)\right]^{2}.
\end{equation}
The asymptotic properties of the regrets are crucial for the theoretical analysis of adaptive control, which will be established in the following theorem.

\begin{theorem}\label{le1}
Under Assumptions $\ref{ass3}$-$\ref{ass1}$, suppose for some $\delta>0$,
\begin{equation}\label{ee25}
u_{n}=O(\min\{d_{n}, n^{1+\delta}\}), \;\;a.s.,\;\; n\to \infty,
\end{equation}
where $d_{n}$ is defined in $(\ref{e21})$. Then the following properties hold:
\begin{itemize}
\item The sample paths of the accumulated regrets have the following asymptotic upper bound as $n\to \infty$:
\begin{equation}\label{e27}
\begin{aligned}
\sum_{k=0}^{n}R_{k}
=O\left(\alpha_{n}^{2}p_{n}^{2}\log n\right),\;\;a.s.,
\end{aligned}
\end{equation}
where $p_{n}$, $\alpha_{n}$, and $R_{n}$ are defined as in $(\ref{pn})$, $(\ref{22})$, and $(\ref{rr})$, respectively.  
\item The parameter estimates given by Algorithm \ref{alg1} have the following asymptotic upper bound as $n\to \infty$:
\begin{equation}\label{6}
\left\|\theta(n)-\theta_{n}(n)\right\|^{2}=O\left(\frac{\alpha_{n}^{2}p_{n}\log n}{\lambda_{\min}(n)}\right), a.s.,
\end{equation}
where $\lambda_{\min}(n)$ is defined by
\begin{equation}\label{lam}
\lambda_{\min}(n)=\lambda_{\min}\left\{P_{0}^{-1}+\sum\limits_{k=0}^{n}\phi_{k}(n)\phi_{k}^{\top}(n)\right\}.
\end{equation}
\end{itemize}
\end{theorem}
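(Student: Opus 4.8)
The plan is to carry out a stochastic Lyapunov analysis of the double-indexed error process, combining the classical least-squares machinery of Lai--Wei and Guo--Chen with the double array martingale estimation theorem of \cite{g1990} to cope with the dimension $p_n$ growing to infinity. First I would set up the error recursion. Writing $\tilde\theta_k(n)=\theta(n)-\theta_k(n)$ and recalling that $\mathbb{E}[s_{k+1}\mid\mathcal{F}_k]=1-F_{k+1}(c_k-\phi_k^\top\theta)$, I would decompose the innovation as
\begin{equation*}
e_{k+1}(n)=\varepsilon_{k+1}+\bar f_k\big(\phi_k^\top(n)\tilde\theta_k(n)+\Delta_k(n)\big),
\end{equation*}
where $\varepsilon_{k+1}=s_{k+1}-\mathbb{E}[s_{k+1}\mid\mathcal{F}_k]$ is a bounded martingale difference, $\Delta_k(n)=\sum_{i>p_n}G_iu_{k-i+1}$ is the truncation tail, and $\bar f_k=f_{k+1}(\xi_k)$ comes from the mean value theorem applied to $F_{k+1}$. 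The design of the gain $\beta_k$, together with the $L_1$-projection onto $D_k(n)$ (which keeps $\|\theta_k(n)\|_1\le d_k$) and the stability bound $|y_{k+1}|\le(\sum_i\|G_i\|)\max_i|u_i|$, guarantees that $\xi_k$ lies in the range over which $\beta_k$ is an infimum of $f_{k+1}$, so that $\bar f_k\ge\beta_k>0$. This sign information is the crux that substitutes for the unavailable exact regressor.

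Next I would introduce the stochastic Lyapunov function $V_k(n)=\tilde\theta_k(n)^\top P_k^{-1}(n)\tilde\theta_k(n)$, using $P_{k+1}^{-1}(n)=P_k^{-1}(n)+\beta_k^2\phi_k(n)\phi_k^\top(n)$ from the matrix inversion lemma and the non-expansiveness of the projection in the $P_{k+1}^{-1}(n)$-weighted norm (valid once $\theta(n)\in D_k(n)$, which holds for all large $k$ since $\|\theta(n)\|_1$ is uniformly bounded while $d_k\to\infty$). A direct expansion yields, with $m_k=\phi_k^\top(n)\tilde\theta_k(n)$,
\begin{equation*}
V_{k+1}(n)\le V_k(n)-\beta_k^2 m_k^2+(1-a_k(n))\varepsilon_{k+1}^2+\mu_{k+1}(n)+\tau_k(n),
\end{equation*}
where the bound $\beta_k m_k^2(\beta_k-2\bar f_k)\le-\beta_k^2 m_k^2$ supplies the genuine contraction, $\mu_{k+1}(n)$ collects the martingale-difference cross terms carrying $\varepsilon_{k+1}$, and $\tau_k(n)$ collects the terms involving $\Delta_k(n)$. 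Summing over $k$ gives the master inequality $V_n(n)+\sum_k\beta_k^2 m_k^2\le V_0(n)+\sum_k(1-a_k(n))\varepsilon_{k+1}^2+\sum_k\mu_{k+1}(n)+\sum_k\tau_k(n)$.

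Then I would bound the remaining terms. Since $|\varepsilon_{k+1}|\le1$, the log-determinant telescoping $\log(1+\beta_k^2\phi_k^\top(n)P_k(n)\phi_k(n))\ge1-a_k(n)$ gives $\sum_k(1-a_k(n))\varepsilon_{k+1}^2\le\log\det P_n^{-1}(n)-\log\det P_0^{-1}(n)=O(p_n\log n)$, because each eigenvalue of $P_n^{-1}(n)$ is at most polynomial in $n$ under \eqref{ee25}. The truncation sum is negligible: with $p_n=\lfloor\log^a n\rfloor$ and $\|G_i\|=O(e^{-ri})$, the tail $\Delta_k(n)$ is super-polynomially small, so $\sum_k\tau_k(n)=O(1)$ after Cauchy--Schwarz. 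The martingale term $\sum_k\mu_{k+1}(n)$ is where the double array martingale estimation theorem of \cite{g1990} enters: it controls $\sum_k\mu_{k+1}(n)$ simultaneously for all $n$, absorbing its dominant part into $\tfrac12\sum_k\beta_k^2 m_k^2$ and leaving an $O(p_n\log n)$ remainder almost surely. Consequently $V_n(n)=O(p_n\log n)$ and $\sum_k\beta_k^2 m_k^2=O(p_n\log n)$.

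Finally I would convert these into the two claimed bounds, the key quantitative link being that the construction of $d_k$ through the inverse function $h$ of $g$ is tailored precisely so that the gain satisfies $\beta_k^{-2}=O(\alpha_k^2)$ a.s., which is the source of the $\alpha_n^2$ factors. For the parameter error, $\|\theta(n)-\theta_n(n)\|^2\le V_n(n)/\lambda_{\min}(P_n^{-1}(n))$ together with $\lambda_{\min}(P_n^{-1}(n))\ge\beta_n^2\lambda_{\min}(n)$ (using the monotonicity built into $\beta_k$ via $g$) yields \eqref{6}. For the regret, I would recall $R_k=(\phi_k^\top(k)\tilde\theta_k(k)+\Delta_k(k))^2$ lives on the diagonal $n=k$; applying the double array theorem to the diagonal sum and passing from $\sum_k\beta_k^2 m_k^2$ to $\sum_k R_k$ via $\beta_k^{-2}=O(\alpha_k^2)\le O(\alpha_n^2)$, with an extra factor $p_n$ arising from the diagonal accumulation, produces \eqref{e27}. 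The main obstacle throughout is exactly this diagonal, growing-dimension structure: the estimate at step $k$ lives in a space of dimension $p_k$ that keeps increasing, so no single fixed-$n$ recursion controls the regret, and a uniform-in-$n$ martingale bound is indispensable; verifying the hypotheses of the double array martingale theory (the uniform conditional-variance and growth conditions across the array) is the hardest part of the argument.
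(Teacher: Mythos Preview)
Your proposal is correct and follows essentially the same route as the paper: the weighted Lyapunov function $V_k(n)=\tilde\theta_k(n)^\top P_k^{-1}(n)\tilde\theta_k(n)$, the innovation split into a bounded martingale difference plus a mean-value-theorem term dominated via $\bar f_k\ge\beta_k$, the log-determinant bound $O(p_n\log n)$, truncation tails killed by $\|G_i\|=O(e^{-ri})$, and the double array martingale lemma of \cite{gannals} to absorb the cross terms. The one place where the paper is more concrete than your sketch is the diagonal regret bound: rather than ``applying the double array theorem to the diagonal sum'', the paper partitions the time axis by $s_i=\lfloor e^{i/(1+a)}\rfloor$ so that $p_k$ is \emph{constant} on each block $(s_i,s_{i+1}]$, hence $\theta_k(k)=\theta_k(s_{i+1})$ and $\phi_k(k)=\phi_k(s_{i+1})$ there, applies the already-proven fixed-$n$ bound on each block, and sums over the $j_n\le p_{n+1}$ blocks to pick up the extra factor $p_n$ --- this is the precise mechanism behind your ``diagonal accumulation''.
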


The proof is supplied in the next section.

The results of the accumulated regrets in $(\ref{e27})$ do not require any excitation condition, but a certain growth rate condition $(\ref{ee25})$ on the system signals is required, which will be verified for the closed-loop adaptive systems in the proof of Theorem \ref{thm2} in the next section. 

\begin{remark}
If $A(z)$ and $B(z)$ have no common factor and $\left[a_{p}, b_{q}\right]\not=0$, it is well known that the linear equations connecting $\{a_{i}, b_{j}, 1\leq i\leq p, 1\leq j \leq q\}$ with $\{G_{i}, i\geq 1\}$ are as follows:
\begin{equation}
b_{i}=\sum_{j=0}^{i \land p}a_{j}G_{i-j},\;\;\forall 1\leq i\leq q,
\end{equation}
where $\land$ represents the minimum value, and
\begin{equation}\label{var}
\left[a_{1},\cdots,a_{p}\right]^{\top}=-\left(LL^{\top}\right)^{-1}L\left[G_{q+1},\cdots, G_{q+p}\right]^{\top},
\end{equation} 
where 
\begin{equation}
L=\left[\begin{matrix} 
G_{q} & G_{q+1} &\cdots & G_{q+p-1}\\ 
G_{q-1} & G_{q} &  \cdots  & G_{q+p-2}\\
\cdots & \cdots &\cdots & \cdots \\
G_{q-p+1} & G_{q-p+2} &\cdots &G_{q} 
\end{matrix}\right],
\end{equation}
and $G_{i}\overset{\triangle}{=}0$ for $i<0$. Replacing $\left\{G_{i}, i\geq 1\right\}$ in $(\ref{var})$ with their estimates $\{\theta_{n}(n),  n \geq 1\}$ generated by Algorithm \ref{alg1}, we can obtain the estimates $\{\vartheta_{n}, n\geq 1\}$ for the unknown parameter $\vartheta$. From $(\ref{6})$, if we have
\begin{equation}\label{ee22}
\alpha_{n}^{2}p_{n}\log n=o\left(\lambda_{\min}(n)\right), a.s.,
\end{equation}
as $n\to \infty$, then the estimates $\vartheta_{n}$ will be strongly consistent, i.e., $\lim\limits_{n\to \infty}\|\vartheta-\vartheta_{n}\|=0,\;a.s.$ The condition $(\ref{ee22})$ does not need the input signals to be PE and will be used in the design of adaptive control in the next section.
\end{remark}

\subsection{Adaptive tracking control}

\subsubsection{Main ideas}
Although we have relaxed the boundedness conditions on the system inputs in the asymptotic analysis of the parameter estimation algorithms, Theorem \ref{le1} still imposes a growth rate constraint on the system inputs. Unfortunately, guaranteeing such a constraint theoretically appears to be quite hard for a control law designed directly using the certainty equivalence principle, i.e.,
\begin{equation}\label{e16}
u_{n}^{0}=\frac{1}{G_{1n}}\left(y_{n+1}^{*}-\sum_{i=2}^{\infty}G_{in}u_{n-i+1}^{0}\right),
\end{equation}
which prompts us to make some modifications to the controller design.

To motivate the ideas for improving the controller design, we first considered the case where the parameter $\theta$ is known. Since $G(z)$ is stable, it is easy to verify that the input sequence $\{u_{n}, n\geq 1\}$ generated by $(\ref{e14})$ is a bounded sequence. Furthermore, if the parameter estimation $\theta_{n}(n)$ is located within a certain ``stable neighborhood" of the true parameter $
\theta$, it can also be proven that the input sequence keeps bounded. See Lemma \ref{lem22} below for details.

Based on the above analysis, for the case where the parameter is unknown, if the input value $u_{n}$ given by $(\ref{e16})$ grows too fast, it indicates that we may not have sufficient signal information to drive the parameter estimates into a certain neighborhood of the true parameter, and in this case, online parameter learning may be called for. On the other hand, according to Theorem \ref{le1}, if the minimum eigenvalue of the information matrix, i.e., $\lambda_{\min}(n)$, is sufficiently large, then there is adequate signal information to ensure the parameter estimates enter into a ``stable neighborhood", leading to the stability of the adaptive control. These two levels of signal information for estimation inspire a switching adaptive control between adaptive learning and feedback in the initial phase, but it will be proven that after some finite time, the adaptive controller will rarely be switched to the learning phase. 

We now proceed to introduce the specific design of adaptive control.


\subsubsection{Adaptive controller}
Firstly, to deal with the potential occurrence of $b_{1n}=0$ in $(\ref{e16})$, we make the necessary modification to the estimates produced by Algorithm \ref{alg1}. 
Let
\begin{equation}\label{166}
\begin{aligned}
\hat{\theta}_{n}(n)&=\theta_{n}(n)+\left(a_{n}(n)P_{n}(n)\right)^{\frac{1}{2}}e_{i_{n}}\\
&=\left[\hat{G}_{1n}, \hat{G}_{2n}, \cdots,\hat{G}_{p_{n}n}\right],
\end{aligned}
\end{equation}
where $\{i_{n}\}$ is a sequence of integers defined by
\begin{equation}\label{177}
i_{n}=\mathop{\arg\max}_{0\leq i\leq p_{n}}\left|G_{1n}+e_{1}^{\top}P_{n}^{\frac{1}{2}}(n)e_{i}(n)\right|,
\end{equation}
with $e_{0}(n)=0$, and $e_{i}(n), 1\leq i\leq p_{n}$ representing the $i$th column of the $p_{n}\times p_{n}$ identity matrix. Consequently, the modified adaptive control law is given by
\begin{equation}\label{14}
u_{n}^{s}=\frac{1}{\hat{G}_{1n}}\left(y_{n+1}^{*}-\hat{G}_{2n}u_{n-1}\cdots-\hat{G}_{p_{n}n}u_{n-p_{n}+1}\right).
\end{equation}
To make the adaptive system have some guaranteed level of excitation, we now introduce an arbitrary i.i.d. sequence $\{v_{t}\}$ independent of $\{w_{t}\}$ with properties
\begin{equation}\label{e388}
\mathbb{E}[v_{k}]=0, \;\;\mathbb{E}[v_{k}^{2}]=1,\;\; \|v_{k}\|\leq \bar{v},
\end{equation}
where $\bar{v}$ is a constant. 
Let us partition the time axis by a sequence of stopping times
\begin{equation}
1=\tau_{1}<\sigma_{1}<\tau_{2}<\sigma_{2}<\cdots
\end{equation}
at which the adaptive controller $u_{t}$ will be switched, i.e.,
\begin{equation}\label{control}
u_{t}=\left\{
		\begin{array}{rcl}
		 			&u_{t}^{s},\;\;\;           & \text{if}\; t\; \text{belongs to some} [\tau_{k}, \sigma_{k}) \\       
					 &v_{t},\;\;\; &\text{if}\; t\; \text{belongs to some} [\sigma_{k}, \tau_{k})  
		\end{array} \right.,
\end{equation}
where
\begin{equation}\label{201}
\begin{aligned}
\sigma_{k}=
\sup\left\{t>\tau_{k}: |u_{j}^{s}|\leq \min\{d_{j}, j^{b}\}+\left|u_{0}^{s}\right|,\;\forall j \in [\tau_{k},t)\right\}
\end{aligned}
\end{equation}
\begin{equation}\label{200}
\begin{aligned}
&\tau_{k+1}=
\inf\left\{t>\sigma_{k}: \lambda_{\min}(t)\geq \alpha_{t}^{2}p_{t}\log^{1+\epsilon}t \vphantom{\max_{t-p_{t}\leq j \leq t-1}\{|u_{j}|, |u_{t}^{s}|\}\leq \min\{\sqrt{d_{t}}, t^{\frac{b}{2}}\} }\right., \\
&\quad\quad\;\;\;\;\;\quad\left.\max_{t-p_{t}\leq j \leq t-1}\left\{|u_{j}|, |u_{t}^{s}|\right\}\leq \min\{\sqrt{d_{t}}, t^{\frac{b}{2}} \} \right\},\\
\end{aligned}
\end{equation}
where $b$ and $\epsilon$ are fixed constants with $b\in(0,\frac{1}{4})$ and $\epsilon>0$.

We now make a brief explanation about the switching adaptive control $(\ref{control})$. By $(\ref{control})$-$(\ref{200})$, it is easy to see that from the random time $\tau_{k}$, the adaptive control $u_{t}$ is taken as $u_{t}^{s}$ as far as $t<\sigma_{k}$, where $\sigma_{k}$ is the first time when the growth rate of $|u_{t}^{s}|$ is greater than $\min\{d_{t}, t^{b}\}+|u_{0}^{s}|$; then from the random time $\sigma_{k}$, the adaptive control is taken as $v_{t}$, until both the growth rate of the the minimum eigenvalue $\lambda_{\min}(t)$ is larger than $\alpha_{t}^{2}p_{t}\log^{1+\epsilon} t$ and the growth rate of $|u_{t}^{s}|$ is less than $\min\{\sqrt{d_{t}}, t^{\frac{b}{2}}\}$.
Moreover, according to $(\ref{control})$-$(\ref{200})$, it can easily been seen that for any $t\geq 0$, the growth rate of the adaptive control $u_{t}$ is constrained within the following bound:
\begin{equation}
|u_{t}|\leq \min\left\{d_{t}, t^{b}\right\}, t\geq 0.
\end{equation}

\begin{theorem}\label{thm2}
Let Assumptions $\ref{ass3}$-$\ref{ass1}$ be satisfied. Then the control system $(\ref{eq1})$ under adaptive control law $(\ref{control})$-$(\ref{200})$ is globally asymptotically optimal, and has the following convergence rate (regret of adaptive tracking) as $n\rightarrow \infty$:
\begin{equation}
\sum_{k=1}^{n}(y_{k}^{*}-y_{k})^{2}=O\left(n^{\delta}p_{n}\right),\;\;\;\;a.s.
\end{equation}
for any $ \delta \in \left(\frac{1}{2}+2b, 1\right)$, where $p_{n}$ and $b$ are defined in $(\ref{pn})$ and $(\ref{201})$, $(\ref{200})$, respectively.
\end{theorem}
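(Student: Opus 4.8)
The plan is to feed the sample-path bounds of Theorem~\ref{le1} into an analysis of the switching rule $(\ref{control})$--$(\ref{200})$, and then to split the accumulated tracking error into a \emph{feedback} part governed by the regret and an \emph{excitation} part governed by the total length of the learning phases. The starting point is that the controller is built so that $|u_t|\le\min\{d_t,t^b\}$ for every $t$; since $b<\tfrac14<1+\delta$, this is precisely the growth hypothesis $(\ref{ee25})$, so Theorem~\ref{le1} applies along every sample path and provides both the regret bound $(\ref{e27})$ and the estimation bound $(\ref{6})$ for the closed-loop signals.

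First I would control the learning phases. The key step is a closed-loop excitation estimate: on any interval $[\sigma_k,\tau_{k+1})$ the input equals the i.i.d.\ signal $v_t$ of $(\ref{e388})$, which is independent of the past and has unit variance, so $\mathbb{E}[\phi_k(n)\phi_k^\top(n)\mid\mathcal F_{k-1}]$ is comparable to the identity and the accumulated matrix in $(\ref{lam})$ gains order one per step. A concentration/martingale argument for $\sum_k\bigl(\phi_k\phi_k^\top-\mathbb{E}[\phi_k\phi_k^\top\mid\mathcal F_{k-1}]\bigr)$, made uniform over the slowly growing dimension $p_n$, then shows that $\lambda_{\min}(t)$ grows essentially linearly in the number of excitation steps, so the threshold $\lambda_{\min}(t)\ge\alpha_t^2 p_t\log^{1+\epsilon}t$ in $(\ref{200})$ is reached after $O(\alpha_t^2 p_t\log^{1+\epsilon}t)$ such steps. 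Once the threshold holds, $(\ref{6})$ yields $\|\theta(t)-\theta_t(t)\|^2=O\!\bigl(\alpha_t^2 p_t\log t/\lambda_{\min}(t)\bigr)=O(\log^{-\epsilon}t)\to0$, so the estimate enters the stable neighborhood; by Lemma~\ref{lem22} the certainty-equivalence input $u_t^s$ and all closed-loop signals then stay bounded, the condition defining $\sigma_k$ in $(\ref{201})$ is no longer triggered, and the controller remains in feedback, new learning being needed only to keep pace with the slowly growing threshold. Summing the per-phase lengths bounds the total excitation time up to $n$ by $O(\alpha_n^2 p_n\log^{1+\epsilon}n)$.

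With the switching under control I would bound the tracking error case by case. On a feedback step the controller satisfies $y_{t+1}^*=\phi_t(t)^\top\hat\theta_t(t)$ by $(\ref{14})$, so
\[
y_{t+1}-y_{t+1}^*=\bigl(y_{t+1}-\phi_t(t)^\top\theta_t(t)\bigr)-\phi_t(t)^\top\bigl(\hat\theta_t(t)-\theta_t(t)\bigr),
\]
whence $(y_{t+1}-y_{t+1}^*)^2\le 2R_t+2\bigl[\phi_t(t)^\top(\hat\theta_t(t)-\theta_t(t))\bigr]^2$. The regret sum is $O(\alpha_n^2 p_n^2\log n)$ by $(\ref{e27})$; for the modification term I would use $(\ref{166})$ and Cauchy--Schwarz to obtain $\bigl[\phi_t(t)^\top(\hat\theta_t(t)-\theta_t(t))\bigr]^2\le a_t(t)\,\phi_t^\top(t)P_t(t)\phi_t(t)=(1-a_t(t))/\beta_t^2$, and then combine the elementary identity $\sum_t(1-a_t(t))=O(\log\det P_n^{-1}(n))=O(p_n\log n)$ with the gain lower bound $\beta_t\ge g(0)/\alpha_t$ implied by the design $(\ref{e21})$ once the signals are bounded, giving a feedback contribution of order $O(\alpha_n^2 p_n^2\log n)$. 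On an excitation step, stability of $G(z)$ in $(\ref{pro})$ together with $|u_t|\le t^b$ gives $|y_{t+1}|=O(t^b)$ and hence $(y_{t+1}-y_{t+1}^*)^2=O(t^{2b})$; multiplying by the total excitation length bounds this contribution by $O(\alpha_n^2 p_n\log^{1+\epsilon}n\cdot n^{2b})=o\!\bigl(n^{1/2+2b}p_n\log^{1+\epsilon}n\bigr)$, using $\alpha_n^2=o(\sqrt n)$. This excitation part dominates the feedback part, and absorbing the polylogarithmic factors into the exponent gives $\sum_{k\le n}(y_k^*-y_k)^2=O(n^\delta p_n)$ for every $\delta\in(\tfrac12+2b,1)$; dividing by $n$ shows $J_n\to0$, i.e.\ global asymptotic optimality.

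The step I expect to be the main obstacle is the closed-loop excitation estimate. Because the signals, the parameter estimates and the random switching times are mutually dependent, one cannot separately guarantee excitation, signal boundedness and estimation accuracy; the delicate point is to run a single induction over the switching epochs that simultaneously (i) lower-bounds $\lambda_{\min}(\cdot)$ of the growing-dimension information matrix from the injected i.i.d.\ signal, (ii) invokes Lemma~\ref{lem22} to keep signals bounded once $(\ref{6})$ forces the estimate into the stable neighborhood, and (iii) keeps the total learning time at $o(n^{1/2+\eta})$, so that the excitation contribution does not overwhelm the regret.
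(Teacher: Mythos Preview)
Your overall strategy coincides with the paper's: verify the hypothesis of Theorem~\ref{le1} via the built-in bound $|u_t|\le\min\{d_t,t^b\}$, control the learning phases through an excitation lower bound on $\lambda_{\min}$, use Lemma~\ref{lem22} to pass from a good parameter estimate to bounded $u_t^s$, and split the tracking error into a feedback part governed by the regret and an excitation part governed by the total learning time. The paper organizes this via a trichotomy on whether some $\sigma_k$ or $\tau_{k+1}$ is infinite, but the substance is the same.

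There is, however, a real quantitative gap in your excitation estimate. You assert that during a learning phase ``$\lambda_{\min}(t)$ grows essentially linearly in the number of excitation steps'' and conclude that the threshold $\alpha_t^2p_t\log^{1+\epsilon}t$ is reached after $O(\alpha_t^2p_t\log^{1+\epsilon}t)$ such steps. But the concentration you sketch has a fluctuation term that scales with the \emph{total} elapsed time, not with the window length: this is exactly what the paper's Lemma~\ref{lem33} makes precise. Its proof, via the double-array martingale bound of Lemma~\ref{lem2}, shows that over the last $n^\eta$ excitation steps
\[
\lambda_{\min}\Bigl\{\sum_{t=n-n^\eta}^{n}\psi_t(n)\psi_t^\top(n)\Bigr\}\ \ge\ n^\eta - O\bigl(p_n\sqrt{n\log\log n}\,\bigr),
\]
so the lower bound is useful only when $\eta>\tfrac12$. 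In other words, the martingale fluctuation for the growing-dimension information matrix is of order $p_n\sqrt{n}$ in absolute time; a learning stretch of length $O(\alpha_n^2p_n\log^{1+\epsilon}n)=o(\sqrt n\cdot\text{polylog})$ cannot be shown to raise $\lambda_{\min}$ at all by this route, because the fluctuation swamps the mean. Consequently the correct upper bound on the learning time feeding the excitation contribution is $O(n^\eta)$ for any fixed $\eta>\tfrac12$, and it is this $\eta$ (not the relation $\alpha_n^2=o(\sqrt n)$ that you invoke) which produces the threshold $\delta>\tfrac12+2b$ in the theorem: the excitation part is $O(n^{\eta}\cdot n^{2b})=O(n^{\delta})$. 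Your final bound happens to be right, but the mechanism you use to reach it does not survive the double-array fluctuation. You should replace the informal ``order one per step'' argument by the window estimate of Lemma~\ref{lem33}, and organize the proof along the three cases $(\sigma_k<\infty,\ \tau_{k+1}=\infty)$, $(\tau_k<\infty,\ \sigma_k=\infty)$, $(\tau_k,\sigma_k<\infty\ \forall k)$ as the paper does, since your unified induction neither rules out the first case nor controls the frequency of switches in the third.
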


	\section{Proof of main results}\label{se4}	
	\subsection{Proof of Theorem \ref{le1}}
	We first introduce the following lemma on double array martingale theory:
	\begin{lemma}\label{gannals}(\cite{gannals})
Let $\{w_{t}, \mathcal{F}_{t}\}$ be a martingale difference sequence satisfying
\begin{equation}
\sup_{t}\mathbb{E}\left[\|w_{t+1}\|^{2}\mid \mathcal{F}_{t}\right]<\infty, \;\;\; \|w_{t}\|=o(\varphi(t)),\;\;a.s.,
\end{equation}
where $\varphi(\cdot)$ is a deterministic positive, non-decreasing function and satisfies
\begin{equation}
\sup_{k}\varphi(e^{k+1})/\varphi(e^{k})<\infty.
\end{equation}
Let $\{f_{t}(k)\}, t, k=1, 2, \cdots$ is an $\mathcal{F}_{t}-$measurable random sequence. Then for $p_{n}=O\left(\lfloor\log^{a} n\rfloor\right), a>1$, the following property holds as $n\to \infty$:
\begin{equation}
\max_{1\leq k\leq p_{n}}\max_{1\leq i\leq n}\left\| \sum_{t=1}^{i}f_{t}(k)w_{t+1}\right\|=O(a_{n}\varphi(n)\log\log n),\;\;a.s.,
\end{equation}
where
\begin{equation}
a_{i}=\max_{1\leq k\leq p_{n}} g_{i}(k),\; g_{i}(k)=\left[\sum_{t=1}^{i}\|f_{t}(k)\|^{2}+1\right]^{\frac{1}{2}},\; g_{0}(k)=1.
\end{equation}
\end{lemma}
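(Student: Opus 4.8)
The plan is to reduce the claim to a maximal exponential (Freedman--Bernstein) inequality for a bounded martingale transform, and then to pay separately for the two suprema --- over the array index $k\le p_n$ and over the partial-sum index $i\le n$ --- by a union bound and a geometric-blocking argument, closing with the Borel--Cantelli lemma. Throughout write $\sigma^2=\sup_t\mathbb{E}[\|w_{t+1}\|^2\mid\mathcal{F}_t]<\infty$ and $M_i(k)=\sum_{t=1}^i f_t(k)w_{t+1}$. The target rate $a_n\varphi(n)\log\log n$ is best read as a Bernstein large-deviation scale: $\varphi$ controls the size of the increments while $g_n(k)^2=\sum_{t\le n}\|f_t(k)\|^2+1\le a_n^2$ controls the conditional variance, and $\log\log n$ is exactly the inflation at which the exponential tail becomes summable along a geometric subsequence.

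First I would truncate the noise to bounded increments. Since $\|w_t\|=o(\varphi(t))$ a.s., the set $\{t:\|w_{t+1}\|>\varphi(t)\}$ is a.s. finite, so $\tilde w_{t+1}=w_{t+1}\mathbf{1}(\|w_{t+1}\|\le\varphi(t))$ agrees with $w_{t+1}$ past a finite random time; the resulting remainder $\sum_{t}f_t(k)(w_{t+1}-\tilde w_{t+1})$ is a fixed finite sum whose norm is at most $K(\omega)\,a_n$ uniformly in $k\le p_n$ and $i\le n$ (by Cauchy--Schwarz against $g_n(k)\le a_n$), which is $o(a_n\varphi(n)\log\log n)$ and hence negligible. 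Truncation destroys the martingale property, so I recenter $\bar w_{t+1}=\tilde w_{t+1}-\mathbb{E}[\tilde w_{t+1}\mid\mathcal{F}_t]$, a martingale difference with $|\bar w_{t+1}|\le 2\varphi(t)$ and $\mathbb{E}[\bar w_{t+1}^2\mid\mathcal{F}_t]\le\sigma^2$. The compensator equals $-\mathbb{E}[w_{t+1}\mathbf{1}(\|w_{t+1}\|>\varphi(t))\mid\mathcal{F}_t]$, of norm at most $\sigma^2/\varphi(t)$ by the conditional second-moment bound, so the weighted compensator sum is a lower-order term handled again by Cauchy--Schwarz against $a_i$.

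The core estimate applies the maximal Freedman inequality to the bounded transform $\bar M_i(k)=\sum_{t=1}^i f_t(k)\bar w_{t+1}$ for each fixed $k$. Its increments are bounded by $c\le 2a_n\varphi(n)$ (using $\|f_t(k)\|\le g_t(k)\le a_n$ and $\varphi$ non-decreasing) and its predictable variation by $v\le\sigma^2 a_n^2$, so with $\lambda=C a_n\varphi(n)\log\log n$ one gets $\mathbb{P}(\max_{i\le n}|\bar M_i(k)|\ge\lambda)\le 2\exp(-\lambda^2/(2(v+c\lambda)))$. Because the $\log\log n$ inflation forces $c\lambda\gg v$ for large $n$, the exponent is of order $\lambda/c\asymp C\log\log n$, making the probability $O((\log n)^{-c_0 C})$ for some $c_0>0$. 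Evaluating along $n=e^m$ turns this into $O(m^{-c_0 C})$; the supremum over $i$ is already absorbed by the maximal form, the blocks $[e^m,e^{m+1})$ let me replace $\varphi(n)$ by $\varphi(e^m)$ uniformly through $\sup_m\varphi(e^{m+1})/\varphi(e^m)<\infty$, and the union over $k\le p_{e^{m+1}}=O(m^a)$ costs a factor $m^a$. Taking $C$ large enough (depending on $a$) makes $\sum_m m^a m^{-c_0 C}<\infty$, so Borel--Cantelli gives the a.s. bound for $\bar M$, and reassembling the truncation and compensator estimates yields it for $M$.

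The main obstacle is the self-normalization required to make this exponential bound uniform over the random, $k$-dependent normalizers: both the increment bound $\|f_t(k)\|\varphi(t)$ and the quadratic variation $\sum_{t\le i}\|f_t(k)\|^2$ are random, so deterministic $c$ and $v$ cannot simply be substituted into Freedman's inequality. I would resolve this either by a self-normalized exponential inequality (of de la Pe\~{n}a--Lai--Shao type) applied to $\bar M_i(k)/(g_i(k)\vee 1)$, or by stratifying over geometric levels of $\sum_{t\le n}\|f_t(k)\|^2$ and folding the $O(\log n)$ extra strata into the already-summable union bound --- the polylogarithmic growth $p_n=O(\log^a n)$ being exactly what keeps this double (over $k$, and over variance levels) union affordable. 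Combined with the sup-over-$i$ maximal form, this is the delicate part; the truncation and blocking steps are then routine.
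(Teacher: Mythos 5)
You should first note that the paper contains no proof of this lemma at all: it is imported verbatim from the cited reference (Huang and Guo, \emph{Ann.\ Statist.}, 1990), so your attempt can only be judged against the classical argument there, whose overall architecture (truncation at scale $\varphi$, recentering, an exponential martingale inequality in maximal form, geometric blocking via $\sup_k \varphi(e^{k+1})/\varphi(e^k)<\infty$, a union bound over $k\le p_n$ that is affordable precisely because $p_n$ is polylogarithmic, and Borel--Cantelli) your sketch correctly reproduces. However, your compensator step fails as written. From $\|\mathbb{E}[w_{t+1}\mathbf{1}(\|w_{t+1}\|>\varphi(t))\mid\mathcal{F}_t]\|\le \sigma^2/\varphi(t)$, Cauchy--Schwarz gives only $\sum_{t\le i}\|f_t(k)\|\,\sigma^2/\varphi(t)\le \sigma^2 g_i(k)\bigl(\sum_{t\le i}\varphi(t)^{-2}\bigr)^{1/2}$, and nothing in the hypotheses forces $\varphi$ to grow: $\varphi\equiv 1$ is admissible (it just means $w_t\to 0$ a.s.), in which case your bound is of order $a_n\sqrt{n}$, which swamps the target $a_n\varphi(n)\log\log n$. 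The repair is standard but different from what you wrote: use conditional Cauchy--Schwarz to get $\|\mathbb{E}[w_{t+1}\mathbf{1}(\cdot)\mid\mathcal{F}_t]\|\le \sigma\,\mathbb{P}(\|w_{t+1}\|>\varphi(t)\mid\mathcal{F}_t)^{1/2}$, observe that $\|w_t\|=o(\varphi(t))$ a.s.\ means the exceedance events occur finitely often, so by L\'evy's conditional Borel--Cantelli lemma $\sum_t \mathbb{P}(\|w_{t+1}\|>\varphi(t)\mid\mathcal{F}_t)<\infty$ a.s., and then Cauchy--Schwarz against $g_i(k)$ gives the compensator sum as $O(a_n)$ a.s., which is genuinely lower order.

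The second gap is the self-normalization issue you flag yourself, where your proposed repair is not yet a proof. Stratifying over geometric levels of $\sum_{t\le n}\|f_t(k)\|^2$ costs ``$O(\log n)$ extra strata'' only if $g_n(k)$ admitted a polynomial-in-$n$ deterministic bound; the $f_t(k)$ are arbitrary adapted random variables, so the number of occupied strata is random and unbounded. With the natural stratum-$j$ threshold $\lambda_j\asymp 2^{j/2}\varphi(n)\log\log n$, the Freedman tail (in its intersection form $\mathbb{P}(\max_{i\le n}\|M_i(k)\|\ge\lambda_j,\ \sum_{t\le n}\|f_t(k)\|^2\le 2^{j+1})$, which is what legitimizes substituting deterministic $c,v$) is of size $(\log n)^{-c_0C}$ \emph{uniformly} in $j$ but is then not summable over the infinitely many $j$; to close this one needs either stopping times with stratum-dependent inflation of $\lambda_j$ that remains summable in $j$ without destroying the $a_n\log\log n$ target, or a genuinely self-normalized maximal inequality. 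This is exactly where the cited Huang--Guo proof does its heavy lifting, and as written your sketch acknowledges the obstacle but does not overcome it. The remaining components --- the truncation remainder being a finite random sum of size $O(a_n)$, the blocking with $\varphi(e^{m+1})/\varphi(e^m)$ bounded, the $m^a$ union cost from $p_{e^{m+1}}$, and the choice of $C$ large to make $\sum_m m^{a-c_0C}<\infty$ --- are all sound.
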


{\it Proof of Theorem $\ref{le1}$.}
We first introduce the following notations:
\begin{equation}\label{e511}
\begin{aligned}
\psi_{k}(n)&=F_{k+1}\left(c_{k}-\phi_{k}^{\top}\theta\right)-F_{k+1}\left(c_{k}-\phi_{k}^{\top}(n)\theta_{k}(n)\right),\\
\gamma_{k+1}&=s_{k+1}-F_{k+1}\left(c_{k}-\phi_{k}^{\top}\theta\right),\\
\delta_{k}(n)&=\phi_{k}^{\top}\theta-\phi_{k}^{\top}(n)\theta(n),\\
\tilde{\theta}_{k}(n)&=\theta(n)-\hat{\theta}_{k}(n).
\end{aligned}
\end{equation}
By $(\ref{e511})$, it can be easily seen that
\begin{equation}
\left|\psi_{k}(n)\right|\leq 1,\;\; \left|\gamma_{k}\right|\leq 1\;\;1\leq k\leq n,
\end{equation}
and
\begin{equation}
\mathbb{E}\left[\gamma_{k+1}\mid \mathcal{F}_{k}\right]=0,
\end{equation}
which means $\left\{\gamma_{k}, \mathcal{F}_{k}\right\}$ is a martingale difference sequence.

Next, we will consider the following stochastic  Lyapunov function: $$V_{k}(n)=\tilde{\theta}_{k}^{\top}(n)P_{k}^{-1}(n)\tilde{\theta}_{k}(n),\;\; 1\leq k\leq n. $$ 
According to Assumption \ref{ass1} and the definition of $h()\cdot$ in $(\ref{e8})$, we have $\lim\limits_{t\to 0}h(t)=\infty.$ Thus, by the definition of $d_{k}$ in $(\ref{e21})$-$(\ref{22})$, we obtain
\begin{equation}
\lim_{k\to \infty}d_{k}=\infty.
\end{equation}
Therefore, there exists a positive integer $n_{0}$, such that for all $n\geq n_{0}$,  
 \begin{equation}\label{e54}
 \theta\in D_{n_{0}}(n),\;\; n\geq n_{0}.
 \end{equation}
Hence, by $(\ref{e54})$ and the properties of the projection operator (see e.g., Lemma 1 in \cite{ZZ2022}), for each $ k\geq n_{0}$,
\begin{equation}\label{e555}
\begin{aligned}
V_{k}(n)\leq& \left[\tilde{\theta}_{k-1}(n)-a_{k-1}(n)\beta_{k-1}P_{k-1}(n)\phi_{k-1}(n)e_{k}(n)\right]^{\top}\cdot\\
&P_{k}^{-1}(n)\left[\tilde{\theta}_{k-1}^{\top}(n)-a_{k-1}(n)\beta_{k-1}P_{k-1}(n)\phi_{k-1}(n)e_{k}(n)\right].
\end{aligned}
\end{equation}
Furthermore, note that by $(\ref{9})$ and the well-known matrix inversion formula, we have
\begin{equation}\label{e5566}
P_{k+1}^{-1}(n)=P_{k}^{-1}(n)+\beta_{k}^{2}\phi_{k}(n)\phi_{k}^{\top}(n),\;\;0\leq k\leq n-1,
\end{equation}
thus we have
\begin{equation}\label{28}
	\begin{aligned}
		&a_{k-1}(n)P_{k}^{-1}(n)P_{k-1}(n)\phi_{k-1}(n)\\
		=&a_{k-1}(n)\left(I+\beta_{k-1}^{2}\phi_{k-1}(n)\phi_{k-1}^{\top}(n)P_{k-1}(n)\right)\phi_{k-1}(n)\\
		=&a_{k-1}(n)\phi_{k-1}(n)\left(1+\beta_{k-1}^{2}\phi_{k-1}^{\top}(n)P_{k-1}(n)\phi_{k-1}(n)\right)\\
		=&\phi_{k-1}(n), 
\end{aligned}
\end{equation}
and 
\begin{equation}\label{xx}
\begin{aligned}
	&\tilde{\theta}_{k-1}^{\top}(n)P_{k}^{-1}(n)\tilde{\theta}_{k-1}(n)\\
	=&\tilde{\theta}_{k-1}^{\top}(n)P_{k-1}^{-1}(n)\tilde{\theta}_{k-1}(n)\\
	&+\beta_{k-1}^{2}\tilde{\theta}_{k-1}^{\top}(n)\phi_{k-1}(n)\phi_{k-1}^{\top}(n)\tilde{\theta}_{k-1}(n).
	\end{aligned}
\end{equation}
By $(\ref{e555})$, $(\ref{28})$ and $(\ref{xx})$, we can obtain
\begin{equation}\label{27}
	\begin{aligned}
		V_{k}(n)
		\leq&V_{k-1}(n)
		+\beta_{k-1}^{2}\tilde{\theta}_{k-1}^{\top}(n)\phi_{k-1}(n)\phi_{k-1}^{\top}(n)\tilde{\theta}_{k-1}(n)\\
		&-2\beta_{k-1}\tilde{\theta}_{k-1}^{\top}(n)\phi_{k-1}(n)\psi_{k-1}(n)\\
		&+a_{k-1}(n)\beta_{k-1}^{2}\phi_{k-1}^{\top}(n)P_{k-1}(n)\phi_{k-1}(n)\psi_{k-1}^{2}(n)\\
		&+2a_{k-1}(n)\beta_{k-1}^{2}\phi_{k-1}^{\top}(n)P_{k-1}(n)\phi_{k-1}(n)\gamma_{k}\\
		&-2\beta_{k-1}\phi_{k-1}^{\top}(n)\tilde{\theta}_{k-1}(n)\gamma_{k}\\
		&+a_{k-1}(n)\beta_{k-1}^{2}\phi_{k-1}^{\top}(n)P_{k-1}(n)\phi_{k-1}(n)\gamma_{k}^{2}.
	\end{aligned}
\end{equation}
By the definition of $\beta_{k}$ in $(\ref{9})$, we have
\begin{equation}\label{29}
	\begin{aligned}
		&\beta_{k-1}\tilde{\theta}_{k-1}^{\top}(n)\phi_{k-1}(n)\psi_{k-1}(n)\\
		\geq &\beta_{k-1}^{2}\tilde{\theta}_{k-1}^{\top}(n)\phi_{k-1}(n)\left(\tilde{\theta}_{k-1}^{\top}(n)\phi_{k-1}(n)+\delta_{k-1}(n)\right)\\
		\geq&\beta_{k-1}^{2}\left(\tilde{\theta}_{k-1}^{\top}(n)\phi_{k-1}(n)+\delta_{k-1}(n)\right)^{2}\\
		&-\beta_{k-1}\delta_{k-1}(n)\psi_{k-1}(n).
	\end{aligned}
\end{equation}
By the fact that $\left|\psi_{k-1}(n)\right|\leq 1$, we have
\begin{equation}\label{31}
	\begin{aligned}
		 &a_{k-1}(n)\beta_{k-1}^{2}\phi_{k-1}^{\top}(n)P_{k-1}(n)\phi_{k-1}(n)\psi_{k-1}^{2}(n)\\
		 \leq &a_{k-1}(n)\beta_{k-1}^{2}\phi_{k-1}^{\top}(n)P_{k-1}(n)\phi_{k-1}(n).
	\end{aligned}
\end{equation}
Now, substituting $(\ref{29})$ and  $(\ref{31})$ into $(\ref{27})$, and
summing up both sides of $\left(\ref{27} \right)$ from $k=n_{0}$ to $n$, we obtain
\begin{equation}\label{32}
	\begin{aligned}
		V_{n}(n)\leq &V_{n_{0}}(n)+\sum_{k=n_{0}}^{n-1}\beta_{k}^{2}\left[\tilde{\theta}_{k}^{\top}(n)\phi_{k}(n)\right]^{2}\\
		&-2\sum_{k=n_{0}}^{n-1}\beta_{k}^{2}\left(\tilde{\theta}_{k}^{\top}(n)\phi_{k}(n)+\delta_{k}(n)\right)^{2}\\
		&+2\sum_{k=n_{0}}^{n-1}\beta_{k}\delta_{k}(n)\psi_{k}(n)\\
		&+\sum_{k=n_{0}}^{n-1}a_{n-1}(n)\beta_{n-1}^{2}\phi_{n-1}^{\top}(n)P_{n-1}(n)\phi_{n-1}(n)\\
		&-2\sum_{k=n_{0}}^{n-1}\beta_{k}\phi_{k}^{\top}(n)\tilde{\theta}_{k}(n)\gamma_{k+1}\\
		&+2\sum_{k=n_{0}}^{n-1}a_{k}(n)\beta_{k}^{2}\psi_{k}(n)\phi_{k}^{\top}(n)P_{k}(n)\phi_{k}(n)\gamma_{k+1}\\
		&+\sum_{k=n_{0}}^{n-1}a_{k}(n)\beta_{k}^{2}\phi_{k}^{\top}(n)P_{k}(n)\phi_{k}(n)\gamma_{k+1}^{2}.
	\end{aligned}
\end{equation}
We now analyze the right-hand side (RHS) of $(\ref{32})$ term by term. For the third term of $(\ref{32})$, by the definition of $\delta_{k}(n)$ in $(\ref{e511})$, the properties of $\{G_{i}\}$ in $(\ref{pro})$, and the condition on $\{u_{i}\}$ in $(\ref{ee25})$, we have
\begin{equation}\label{e655}
\begin{aligned}
\sum_{k=n_{0}}^{n-1}\delta_{k}^{2}(n)
=&O\left(\sum_{k=n_{0}}^{n-1}\left[\phi_{k}^{\top}\theta-\phi_{k}^{\top}(n)\theta(n)\right]^{2}\right)\\
=&O\left(\sum_{k=n_{0}}^{n-1}\left(\sum_{i=p_{n}+1}^{\infty}\|G_{i}\|\right)^{2}\max_{p_{n}\leq i \leq k}u_{k-i}^{2}\right)\\
=&O\left(n^{3+2\delta}e^{-rp_{n}}\right)=O(1),\;\;a.s.
\end{aligned}
\end{equation}
Notice that
\begin{equation}\label{e66}
\begin{aligned}
&\sum_{k=n_{0}}^{n-1}\beta_{k}^{2}\tilde{\theta}_{k}^{\top}(n)\phi_{k}(n)\delta_{k}(n)\\
\leq &\frac{1}{8}\sum_{k=n_{0}}^{n-1}\beta_{k}^{2}(n)\left[\tilde{\theta}_{k}^{\top}(n)\phi_{k}(n)\right]^{2}+4\sum_{k=n_{0}}^{n-1}\beta_{k}^{2}(n)\delta_{k}^{2}(n),\;\;a.s.
\end{aligned}
\end{equation}
Thus, by $(\ref{e655})$ and $(\ref{e66})$, we have
\begin{equation}\label{e134}
\begin{aligned}
&-2\sum_{k=n_{0}}^{n-1}\beta_{k}^{2}\left(\tilde{\theta}_{k}^{\top}(n)\phi_{k}(n)+\delta_{k}(n)\right)^{2}\\
\leq &-\frac{3}{2}\sum_{k=n_{0}}^{n-1}\beta_{k}^{2}\left[\tilde{\theta}_{k}^{\top}(n)\phi_{k}(n)\right]^{2}+18\sum_{k=n_{0}}^{n-1}\beta_{k}^{2}\delta_{k}^{2}(n)\\
= &-\frac{3}{2}\sum_{k=n_{0}}^{n-1}\beta_{k}^{2}\left[\tilde{\theta}_{k}^{\top}(n)\phi_{k}(n)\right]^{2}+O(1),\;\;a.s.
\end{aligned}
\end{equation}
For the fourth term of $(\ref{32})$, we obtain
\begin{equation}\label{e135}
\begin{aligned}
&\sum_{k=n_{0}}^{n-1}\beta_{k}\psi_{k}(n)\delta_{k}(n)\\
=&O\left(\sum_{k=n_{0}}^{n-1}|\delta_{k}(n)|\right)=O\left(\sqrt{n\sum_{k=n_{0}}^{n-1}\delta_{k}^{2}(n)}\right)\\
=&O\left(n^{2+\delta}e^{-rp_{n}}\right)=O(1),\;\;a.s.
\end{aligned}
\end{equation}
For the fifth term of $(\ref{32})$, by Lemma \ref{lem3} in Appendix, we have
\begin{equation}\label{a136}
\begin{aligned}
&\sum_{k=n_{0}}^{n-1}a_{k}(n)\beta_{k}^{2}\phi_{k}^{\top}(n)P_{k}(n)\phi_{k}(n)\\
=&O\left(\log^{+}\{\det\left(P_{n}^{-1}(n)\right)\}+1\right),\;\;a.s.
\end{aligned}
\end{equation}
Note that by $(\ref{e5566})$,
\begin{equation}\label{a137}
\begin{aligned}
&\log^{+}\left(\det\left(P_{n}^{-1}(n)\right)\right)\\
\leq&p_{n}\log^{+}\left(tr\left[P_{0}^{-1}(n)+\sum_{k=0}^{n-1}\beta_{k}^{2}\phi_{k}(n)\phi_{k}^{\top}(n)\right]\right)\\
=&p_{n}\log^{+}\left(tr\left[P_{0}^{-1}(n)\right]+\sum_{k=0}^{n-1}\beta_{k}^{2}\sum_{i=0}^{p_{n}+1}u_{k-i}^{2}\right)\\
=&O\left(p_{n}\log \left(n^{3+3\delta}\right)\right)=O\left(p_{n}\log n\right),\;\;a.s.
\end{aligned}
\end{equation}
Hence, by $(\ref{a136})$ and $(\ref{a137})$, we have
\begin{equation}\label{e138}
\sum_{k=n_{0}}^{n-1}a_{k}(n)\beta_{k}^{2}\phi_{k}^{\top}(n)P_{k}(n)\phi_{k}(n)=O\left(p_{n}\log n\right),\;\;a.s.
\end{equation}
Furthermore, we analyze the last three terms in $(\ref{32})$ which are related to the martingale difference sequence $\left\{\gamma_{n}, \mathcal{F}_{n}\right\}$ by using the double array martingale limit theory. Since $\left|\gamma_{k}\right|\leq 1$, we have $\|\gamma_{k}\|=o(\log \log (k+e)), a.s.$ Thus, by Lemma \ref{gannals}, we have
\begin{equation}
\begin{aligned}
&\sum_{k=n_{0}}^{n-1}\beta_{k}\phi_{k}^{\top}(n)\tilde{\theta}_{k}(n)\gamma_{k+1}\\
=&O\left(\left(\sum_{k=n_{0}}^{n-1}\beta_{k}^{2}\left[\tilde{\theta}_{k}^{\top}(n)\phi_{k}(n)\right]^{2}\right)^{\frac{1}{2}}(\log\log n)^{2}\right),\\
=&o\left(\sum_{k=n_{0}}^{n-1}\beta_{k}^{2}\left[\tilde{\theta}_{k}^{\top}(n)\phi_{k}(n)\right]^{2}\right)+o\left(\left(\log\log n\right)^{5}\right),\;\;a.s.
\end{aligned}
\end{equation} 
 For the seventh term of $(\ref{32})$, by Lemma \ref{gannals} and $(\ref{a136})$-$(\ref{a137})$, we know that
 \begin{equation}
 \begin{aligned}
 &\sum_{k=n_{0}}^{n-1}a_{k}(n)\beta_{k}^{2}\psi_{k}(n)\phi_{k}^{\top}(n)P_{k}(n)\phi_{k}(n)\gamma_{k+1}\\
 =&o\left(\sum_{k=n_{0}}^{n-1}\left[a_{k}(n)\beta_{k}^{2}(n)\phi_{k}^{\top}(n)P_{k}(n)\phi_{k}(n)\right]^{2}\right)\\
 &+o\left(\left(\log\log n\right)^{5}\right)\\
 =&o\left(p_{n}\log n\right)+o\left(\left(\log\log n\right)^{5}\right),\;\;a.s.
 \end{aligned}
 \end{equation}
For the eighth term of $(\ref{32})$, by the fact that $|\gamma_{k+1}|\leq 1$ and $(\ref{e138})$, we have
\begin{equation}\label{e141}
\begin{aligned}
 \sum_{k=n_{0}}^{n-1}a_{k}(n)\beta_{k}^{2}\phi_{k}^{\top}(n)P_{k}(n)\phi_{k}(n)\gamma_{k+1}^{2}
 =O\left(p_{n}\log n\right),\;\;a.s.
 \end{aligned}
 \end{equation}
Substituting $(\ref{e134})$, $(\ref{e135})$, $(\ref{e138})$-$(\ref{e141})$ into $(\ref{32})$, we finally obtain
\begin{equation}\label{e147}
\begin{aligned}
&\tilde{\theta}_{n}^{\top}(n)P_{n}^{-1}(n)\tilde{\theta}_{n}(n)+\frac{1}{2}\sum_{k=n_{0}}^{n-1}\beta_{k}^{2}\left[\tilde{\theta}_{k}^{\top}(n)\phi_{k}(n)\right]^{2}\\
=&O\left(p_{n}\log n\right),\;\;a.s.
\end{aligned}
\end{equation}

Furthermore, by the definition of $\beta_{n}$ and $g(\cdot)$ in $(\ref{9})$ and $(\ref{e7})$, for each $n\geq 1$, we have 
\begin{equation}\label{148}
\beta_{n}\geq g(0)\alpha_{n}^{-2}.
\end{equation}
Thus, by $(\ref{e5566})$ and $(\ref{148})$, we have
\begin{equation}\label{500i}
\begin{aligned}
\lambda_{\min}\left\{P_{n}^{-1}(n)\right\}
&\geq \lambda_{\min}\left\{\sum_{t=1}^{n}\beta_{t}^{2}\phi_{t}(n)\phi_{t}^{\top}(n)\right\}\\
&\geq g(0)\alpha_{n}^{-2} \lambda_{\min}\left\{\sum_{t=1}^{n}\phi_{t}(n)\phi_{t}^{\top}(n)\right\}.
\end{aligned}
\end{equation}
Hence, $(\ref{6})$ is obtained by $(\ref{e147})$ and $(\ref{500i})$.

For the proof of $(\ref{e27})$, drawing inspiration from the analysis of $ARX(\infty)$ approximation in \cite{g2023}. Let $s_{i}=\lfloor e^{\frac{i}{1+a}}\rfloor$ for each $i\geq 0$. Then for each $s_{i}+1\leq k \leq s_{i+1}$, we have
\begin{equation}\label{e143}
p_{k}=p_{s_{i+1}}.
\end{equation}
Notice that $\{s_{i}, i\geq 0\}$ is a increasing sequence, thus for each $n\geq 1$, there exsits a positive integer $j_{n}$ such that $s_{j_{n}}\leq n < s_{j_{n}+1}$. Moreover, we can easily verify that 
\begin{equation}\label{e144}
j_{n}\leq p_{n+1}.
\end{equation}
Therefore, by $(\ref{e143})$ and $(\ref{e144})$, we can obtain
\begin{equation}\label{152}
\begin{aligned}
&\sum_{k=n_{0}}^{n-1}\beta_{k}^{2}\left[\tilde{\theta}_{k}^{\top}(k)\phi_{k}(k)\right]^{2}\\
\leq&\sum_{i=0}^{j_{n}}\sum_{k=s_{i}+1}^{s_{i+1}}\beta_{k}^{2}\left[\tilde{\theta}_{k}^{\top}(k)\phi_{k}(k)\right]^{2}\\
=&\sum_{i=0}^{j_{n}}\sum_{k=s_{i}+1}^{s_{i+1}}\beta_{k}^{2}\left[\tilde{\theta}_{k}^{\top}(s_{i+1})\phi_{k}(s_{i+1})\right]^{2}\\
=&O\left(\sum_{i=0}^{j_{n}}p_{s_{i+1}}\log p_{s_{i+1}}\right)=O\left(p_{n}^{2}\log n\right),\;\;a.s.
\end{aligned}
\end{equation}
Thus, by $(\ref{148})$ and $(\ref{152})$, we have
\begin{equation}\label{f}
\sum_{k=n_{0}}^{n-1}\left[\tilde{\theta}_{k}^{\top}(k)\phi_{k}(k)\right]^{2}=O\left(\alpha_{n}^{2}p_{n}^{2}\log n\right),\;\;a.s.
\end{equation}
Furthermore, notice that
\begin{equation}\label{ff}
\sum_{k=0}^{n}\delta_{k}^{2}(k)=\sum_{t=0}^{n}\lambda^{2p_{k}}k^{2+2\delta}=O\left(1\right).
\end{equation}
Thus, by $(\ref{f})$ and $(\ref{ff})$, we finally obtain
\begin{equation}
\sum_{k=0}^{n}R_{k}=\sum_{k=0}^{n}\left[\tilde{\theta}_{k}^{\top}(k)\phi_{k}(k)+\delta_{k}(k)\right]^{2}=O\left(\alpha_{n}^{2}p_{n}^{2}\log n\right),\;\;a.s.
\end{equation}

\subsection{Proof of Theorem \ref{thm2}}
Before presenting the proof of Theorem 2, we first prove two key lemmas.

Set
$$\psi_{t}(n)=\left[v_{t}, v_{t-1},\cdots,v_{t-p_{n}}\right]^{\top},\;\; 1\leq t \leq n,$$
where $\{v_{t}\}$ is defined as in $(\ref{e388})$. Then, using double array martingale theory, we can obtain the following lemma regarding the minimum eigenvalue of the information matrix.
\begin{lemma}\label{lem33}
For each $\frac{1}{2}<\eta<1$, there exists constants $c>0$ and $N\geq 1$ such that for each $n\geq N$, we have
\begin{equation}
\lambda_{\min}\left\{\sum_{t=n-n^{\eta}}^{n}\psi_{t}(n)\psi_{t}^{\top}(n)\right\}\geq cn^{\eta}.
\end{equation}
\end{lemma}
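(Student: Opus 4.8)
The plan is to show that the $(p_n+1)\times(p_n+1)$ information matrix $M_n:=\sum_{t=n-n^{\eta}}^{n}\psi_t(n)\psi_t^{\top}(n)$ concentrates around its mean, which is a scalar multiple of the identity. Since $\{v_t\}$ is i.i.d. with $\mathbb{E}[v_t]=0$ and $\mathbb{E}[v_t^2]=1$, the entries of $\mathbb{E}[\psi_t(n)\psi_t^{\top}(n)]$ are $\mathbb{E}[v_{t-i}v_{t-j}]=\delta_{ij}$, so $\mathbb{E}[\psi_t(n)\psi_t^{\top}(n)]=I_{p_n+1}$ and hence $\mathbb{E}[M_n]=(\lfloor n^{\eta}\rfloor+1)I_{p_n+1}$, giving $\lambda_{\min}(\mathbb{E}[M_n])\sim n^{\eta}$. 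It therefore suffices to prove the almost sure bound $\|M_n-\mathbb{E}[M_n]\|=o(n^{\eta})$; then by Weyl's inequality $\lambda_{\min}(M_n)\ge(\lfloor n^{\eta}\rfloor+1)-o(n^{\eta})\ge cn^{\eta}$ for any fixed $c<1$ and all large $n$.

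First I would control $M_n-\mathbb{E}[M_n]$ entrywise. Its diagonal entries are $\sum_{t=n-n^{\eta}}^{n}(v_{t-i}^2-1)$, and its off-diagonal entries (for lag $\ell=|i-j|\ge1$) are, after a shift of the summation index, of the form $\sum_s v_{s+1-\ell}v_{s+1}$. With the natural filtration $\mathcal{F}_s=\sigma(v_r:r\le s)$, both $\{v_t^2-1\}$ and, for each fixed $\ell\ge1$, $\{v_{s+1-\ell}v_{s+1}\}$ are bounded martingale difference sequences, since the coefficient $v_{s+1-\ell}$ is $\mathcal{F}_s$-measurable while $v_{s+1}$ has zero conditional mean. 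Writing each window sum $\sum_{t=n-n^{\eta}}^{n}$ as a difference of two partial sums $\sum_{t=1}^{n}-\sum_{t=1}^{n-n^{\eta}}$, I would apply Lemma \ref{gannals} with $w_{s+1}=v_{s+1}$ (bounded, hence $\|w_s\|=o(\varphi(s))$ for any slowly increasing $\varphi$, e.g. $\varphi(s)=\log s$) and $f_s(\ell)=v_{s+1-\ell}$. Because $g_n(\ell)=[\sum_{t\le n}v_{t+1-\ell}^2+1]^{1/2}=O(\sqrt n)$ and thus $a_n=O(\sqrt n)$, the lemma yields, uniformly over all lags $1\le\ell\le p_n$ and all partial-sum endpoints,
\[
\max_{1\le\ell\le p_n}\max_{1\le m\le n}\Big|\sum_{s=1}^{m}f_s(\ell)w_{s+1}\Big|=O\big(\sqrt n\,\varphi(n)\log\log n\big)=o(n^{\eta}),\quad a.s.,
\]
where the last equality uses $\eta>\tfrac12$. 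The diagonal deviations are handled identically, taking $f_s\equiv1$ and $w_{s+1}=v_{s+1}^2-1$.

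Next I would pass from this entrywise control to the spectral norm using the Gershgorin-type bound $\|A\|\le\max_i\sum_j|A_{ij}|$, valid for symmetric $A$. Each row of $M_n-\mathbb{E}[M_n]$ has at most $p_n+1$ entries, each of size $o(n^{\eta})$, so
\[
\|M_n-\mathbb{E}[M_n]\|\le(p_n+1)\cdot O\big(\sqrt n\,\varphi(n)\log\log n\big)=o(n^{\eta}),\quad a.s.,
\]
since $p_n=\lfloor\log^{a}n\rfloor$ is only poly-logarithmic while $\sqrt n\,\varphi(n)\log\log n=o(n^{\eta})$. Combining this with $\lambda_{\min}(\mathbb{E}[M_n])=\lfloor n^{\eta}\rfloor+1$ yields $\lambda_{\min}(M_n)\ge cn^{\eta}$ for all large $n$, as claimed.

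The main obstacle is obtaining the entrywise fluctuation bound \emph{uniformly} over the growing collection of $O(p_n^2)$ lags while keeping the accumulated row-sum of order $o(n^{\eta})$: a single martingale law of the iterated logarithm controls one entry, but a naive union bound over a dimension that itself grows with $n$ need not preserve the rate. This is precisely the role of the double array martingale theorem (Lemma \ref{gannals}), whose simultaneous maximization over $1\le k\le p_n$ delivers the required uniformity. It is also here that the hypothesis $\eta>\tfrac12$ is essential, since it is exactly what renders the $\Theta(\sqrt n)$ martingale fluctuation negligible relative to the $\Theta(n^{\eta})$ signal contributed by the window of length $n^{\eta}$.
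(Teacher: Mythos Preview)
Your proposal is correct and follows essentially the same approach as the paper: decompose the information matrix into its mean $(\lfloor n^{\eta}\rfloor+1)I$ plus fluctuations, control the diagonal and off-diagonal fluctuations uniformly over all $O(p_n)$ lags via a double array martingale bound of order $O(\sqrt{n}\cdot\mathrm{polylog}\,n)$, and absorb the extra factor $p_n$ using $\eta>\tfrac12$. The only cosmetic differences are that the paper works directly with the Rayleigh quotient $A_n^{\top}M_nA_n$ (equations (\ref{400})--(\ref{433})) rather than bounding $\|M_n-\mathbb{E}M_n\|$ via a row-sum/Weyl argument, and it invokes Lemma~\ref{lem2} (tailored to lagged products $\sum_j x_{j-t}w_j$) rather than the more general Lemma~\ref{gannals}; both choices lead to the same final inequality $\lambda_{\min}(M_n)/n^{\eta}\ge 1 - C\,p_n\sqrt{n\log\log n}/n^{\eta}$.
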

\begin{proof}
For each $n\geq 1$ and any vector $A_{n}=\left[a_{1},\cdots, a_{p_{n}}\right]^{\top}$ with $\|A_{n}\|=1$, we have
\begin{equation}\label{400}
\begin{aligned}
&A_{n}^{\top}\sum_{t=n-n^{\eta}}^{n}\psi_{t}(n)\psi_{t}^{\top}(n)A_{n}\\
=&\sum_{t=n-n^{\eta}}^{n}\left(a_{1}v_{t}+\cdots a_{p_{n}}v_{t-p_{n}+1}\right)^{2}\\
=&\sum_{t=n-n^{\eta}}^{n}\left(a_{1}^{2}v_{t}^{2}+\cdots a_{p_{n}}^{2}v^{2}_{t-p_{n}+1}\right)\\
&+2\sum_{t=n-n^{\eta}}^{n}\sum_{0\leq s<r\leq p_{n}}a_{s+1}a_{r+1}v_{t-s}v_{t-r}.
\end{aligned}
\end{equation}
From Lemma \ref{lem2} in appendix, we have
\begin{equation}\label{411}
\begin{aligned}
&\max_{1\leq s \leq p_{n}}\max_{r< s \leq p_{n}}\left\|\sum_{t=n-n^{\eta}}^{n}v_{t-s}v_{t-r}\right\|\\
=&\max_{1\leq s \leq p_{n}}\max_{r< s \leq p_{n}}\left\|\sum_{t=1}^{n}v_{t-s}v_{t-r}-\sum_{t=1}^{n-n^{\eta}}v_{t-s}v_{t-r}\right\|\\
\leq &\max_{1\leq s \leq p_{n}}\max_{r< s \leq p_{n}}\left\|\sum_{t=1}^{n}v_{t-s}v_{t-r}\right\|\\
&+\max_{1\leq s \leq p_{n}}\max_{r< s \leq p_{n}}\left\|\sum_{t=1}^{n-n^{\eta}}v_{t-s}v_{t-r}\right\|\\
\leq &C_{1}\sqrt{n\log\log n},
\end{aligned}
\end{equation}
where $C_{1}$ does not depend on $n$.
Besides, we have 
\begin{equation}\label{422}
\sum_{t=n-n^{\delta}}^{n}\left(a_{1}^{2}\mathbb{E}\left[v_{t}^{2}\right]+\cdots a_{p_{n}}^{2}\mathbb{E}\left[v^{2}_{t-p_{n}+1}\right]\right)\geq n^{\eta},
\end{equation}
and
\begin{equation}\label{433}
\max_{0\leq s\leq p_{n}-1}a_{s}^{2}\left\|\sum_{t=n-n^{\eta}}^{n}\left(v_{t-s}^{2}-\mathbb{E}\left[v_{t-s}^{2}\right]\right)\right\|\leq C_{2}\sqrt{n\log\log n},
\end{equation}
where $C_{2}$ does not depend on $n$. From $(\ref{400})$, $(\ref{411})$, $(\ref{422})$ and $(\ref{433})$, we obtain that
\begin{equation}
\begin{aligned}
&\frac{1}{n^{\eta}}\lambda_{\min}\left\{\sum_{t=n-n^{\eta}}^{n}\psi_{t}(n)\psi_{t}^{\top}(n)\right\}\\
\geq &1-(C_{1}+C_{2})\frac{p_{n}\sqrt{n\log\log n}}{n^{\eta}},
\end{aligned}
\end{equation}
which deduces Lemma \ref{lem33}.
\end{proof}

The following lemma provides the stability results for the current IIR system.

Given a series $Q(z)=1+\sum\limits_{i=1}^{\infty}q_{i}z^{i}$ with 
\begin{equation}\label{e411}
\sum\limits_{i=1}^{\infty}\left|q_{i}\right|<\infty,\;\; Q(z)\not=0,\;\;\forall |z|\leq 1,
\end{equation}
and a sequence of polynomials $$Q_{n}(z)=1+\sum\limits_{i=1}^{p_{n}}q_{n,i}z^{i},\;\;n\geq 1,$$ where $p_{n}$ is defined as in $(\ref{pn})$, the following lemma holds: 
\begin{lemma}\label{lem22}
Suppose for $n_{0}\leq n \leq N$, we have
\begin{equation}\label{ee42}
Q_{n}(z)u_{n}=\xi_{n},
\end{equation}
where $z$ is the backwards-shift operator, $\{\xi_{n}\}$ is any bounded sequence, $n_{0}$ and $N$ are any positive integers. 
Then there exists a positive constant $\Delta$ depending only on $\{q_{i}, i\geq 1\}$, such that if the condition
\begin{equation}
\sup_{n_{0}\leq n \leq N}\sum\limits_{i=1}^{\infty}\left|q_{i}-q_{n,i}\right|^{2}\leq \Delta^{2}
\end{equation}
holds, where $q_{n,i}\overset{\triangle}{=}0$ for all $i\geq p_{n}$, then it follows that
\begin{equation}
|u_{N}|\leq c_{1}\max_{n_{0}-p_{n_{0}}\leq i\leq n_{0}}|u_{i}|+c_{2}, 
\end{equation}
where $c_{1}$ and $c_{2}$ are positive constants which do not depend on $n_{0}$ and $N$.
\end{lemma}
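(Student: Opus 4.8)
The plan is to treat $(\ref{ee42})$ as a small, $\ell^2$-bounded perturbation of a genuinely (exponentially) stable nominal recursion, and then to propagate boundedness forward from the initial window by a convolution / small-gain estimate. First I would replace the infinite series $Q$ by a stable \emph{polynomial} nominal model. Since $\sum_i|q_i|<\infty$ and $Q(z)\neq0$ on $|z|\le1$ by $(\ref{e411})$, the truncations $Q^{(L)}(z)=1+\sum_{i=1}^{L}q_iz^i$ converge to $Q$ uniformly on $|z|\le1$; hence one can fix $L$ so large that $Q^{(L)}$ has all its roots strictly outside the closed unit disk and $\sum_{i>L}|q_i|\le\epsilon$ for a prescribed small $\epsilon$. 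Being a stable polynomial, $Q^{(L)}$ has an inverse $1/Q^{(L)}(z)=\sum_{j\ge0}h_jz^j$ with exponential decay $|h_j|\le M\lambda^j$, $\lambda\in(0,1)$, so the nominal system is exponentially stable with a fixed Lyapunov function on its $L$-dimensional companion form. Rewriting $(\ref{ee42})$ around this nominal model gives $Q^{(L)}(z)u_n=\xi_n+\rho_n$ with $\rho_n=\sum_{i\ge1}(q_i^{(L)}-q_{n,i})u_{n-i}$.

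The clean step is to show that the perturbation coefficients are uniformly small in $\ell^2$. Splitting the sum at $i=L$ and using $q_i^{(L)}=q_i$ for $i\le L$ and $q_i^{(L)}=0$ for $i>L$, together with the hypothesis $\sum_i|q_i-q_{n,i}|^2\le\Delta^2$ and $\sum_{i>L}q_i^2\le(\sum_{i>L}|q_i|)^2\le\epsilon^2$, one obtains $\sum_i|q_i^{(L)}-q_{n,i}|^2\le3\Delta^2+2\epsilon^2$ for every $n$ in the window. Consequently, by Cauchy--Schwarz, $|\rho_n|\le\sqrt{3\Delta^2+2\epsilon^2}\,\bigl(\sum_{i\ge1}u_{n-i}^2\bigr)^{1/2}$, i.e. the disturbance driving the stable nominal recursion is controlled by the recent output energy with a gain that can be made arbitrarily small by choosing $\Delta$ and $\epsilon$ small. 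I would then feed this into the representation $u_n=(\text{free response from the initial data})+\sum_{j=0}^{n-n_0}h_j(\xi_{n-j}+\rho_{n-j})$, bound the free response by $C\lambda^{n-n_0}\max_{n_0-p_{n_0}\le i\le n_0}|u_i|$ (note $L\le p_{n_0}$ for $n_0$ large, so the seed window in $(\ref{pn})$ contains the needed initial values), use $|\xi_n|\le\|\xi\|_\infty$, and close the loop to deduce $\sup_{n_0\le n\le N}|u_n|\le c_1\max_{n_0-p_{n_0}\le i\le n_0}|u_i|+c_2$, which gives the claim at $n=N$.

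The main obstacle is closing this loop uniformly in $N$ despite the growing order $p_n\to\infty$. The difficulty is that $\rho_n$ depends on a window $u_{n-1},\dots,u_{n-p_n}$ whose \emph{width} grows, so even though each coefficient is tiny in $\ell^2$, the induced $\ell^2\!\to\!\ell^2$ (equivalently $\ell^2\!\to\!\ell^\infty$) operator gain is only $O(\sqrt{p_n}\,\Delta)$ and is not uniformly below the stability threshold for large $n$; a naive small-gain bound therefore fails, and this growing-dimension effect is precisely what distinguishes the present IIR setting from the classical fixed-order case.

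I would overcome this by working with an exponentially weighted energy functional adapted to the nominal decay rate $\lambda$, so that the accumulation of the perturbation over time becomes a convergent geometric series (hence $p_n$-independent) rather than a quantity proportional to $p_n$, and by exploiting the exponential decay $(\ref{pro})$ of the true coefficients to bound the effective width of $\rho_n$. Choosing $\Delta$ small relative to $(1-\lambda)$ and to the fixed Lyapunov constants of $Q^{(L)}$ then yields the required contraction and the uniform-in-$N$ bound. This uniform-in-dimension estimate, rather than the algebra of the reduction above, is the technical heart of the lemma, and it is where the stability margin of $Q$ and the exponential decay of its impulse response are genuinely used.
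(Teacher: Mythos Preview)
Your reduction to a fixed-order stable nominal polynomial $Q^{(L)}$ plus a perturbation $\rho_n$ is exactly the setup the paper uses (there the truncation order is called $k$ and the argument is run through the companion matrix $A_k$). The divergence is in how the perturbation is absorbed, and that is where your argument has a genuine gap.

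You bound $\rho_n$ via Cauchy--Schwarz, correctly identify the resulting $\sqrt{p_n}$ blowup, and then propose to repair it with an exponentially time-weighted energy functional together with an appeal to $(\ref{pro})$. Neither ingredient does the job. First, $(\ref{pro})$ is a property of the specific plant impulse response $\{G_i\}$, not of the generic $\{q_i\}$ in the lemma; the only standing hypothesis is $(\ref{e411})$, i.e.\ $\ell^1$ summability, so you cannot invoke exponential decay of $q_i$ here. Second, a \emph{time}-weighted energy does not obviously tame the \emph{spatial} width $p_n$ of the window entering $\rho_n$: after weighting you still face a bound of order $\Delta\sqrt{p_n}$ on the effective gain, which is not uniform in $n$. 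What you have written is a plan, not a proof, and the plan as stated does not close.

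The paper avoids the $\ell^2$/Cauchy--Schwarz route entirely. It writes the state recursion $x_n=A_{n,k}x_{n-1}+\sum_{i<n-k}h_{n,i}x_i+\epsilon_n$, proves by induction an explicit variation-of-constants formula with kernel $l_{n,i}$, and then shows $\sum_i\|l_{n,i}\|\le\tfrac{2}{3}$ by combining the geometric decay $\|\Phi(n,j)\|\le M\lambda^{n-j}$ with the $\ell^1$ smallness of the tail $\sum_{i>k}|q_{n,i}|$. This yields a one-step contraction $y_n\le\tfrac{2}{3}\,y_{n-1}+\text{(bounded)}$ on $y_n=\max_{i\le n}\|x_i\|$ that is completely insensitive to $p_n$. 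The moral is that the correct norm for the perturbation coefficients is $\ell^1$ (paired with the sup norm on the signal), not $\ell^2$; once you adopt that viewpoint the growing-order obstruction vanishes and no weighted-energy machinery is needed.
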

\begin{proof}
For each positive integer $k$, set
\begin{equation}
A_{k}=\left[\begin{matrix} 
-q_{1} & \cdots &-q_{k} & 0\\ 
1 & \cdots & 0& 0\\
\cdots & \cdots &\cdots & \cdots \\
0 & \cdots &1 &0 
\end{matrix}\right].
\end{equation}
From $(\ref{e411})$, there exists a positive integer $k$ such that $\rho(A_{k})<1$ and
\begin{equation}
\sum\limits_{i=k+1}^{\infty}\left|q_{i}\right|<\frac{1}{4}-\frac{1}{4}\rho(A_{k}).
\end{equation}
We now define a matrix sequence $\{A_{n,k}, n\geq 1\}$ as follows:
\begin{equation}
A_{n,k}=\left[\begin{matrix} 
-q_{n,1} & \cdots &-q_{n,k} & 0\\ 
1 & \cdots & 0& 0\\
\cdots & \cdots &\cdots & \cdots \\
0 & \cdots &1 &0 
\end{matrix}\right].
\end{equation}
Then we can obtain 
\begin{equation}
\left\|A_{k}-A_{n,k}\right\|^{2}\leq \sum_{i=1}^{k}\left|q_{i}-q_{n,i}\right|^{2},
\end{equation}
where we set $q_{n,i}=0, \forall i> p_{n}$.
Thus, let $\Delta=\min\left\{\frac{1}{2}\|A_{k}\|, \frac{1}{12}-\frac{1}{12}\rho(A_{k})\right\}$, we have
\begin{equation}
\begin{aligned}
\sup_{n_{0}\leq n\leq N}\left\|A_{n,k}\right\|&\leq \left\|A_{k}\right\|+\Delta\leq \frac{3}{2}\left\|A_{k}\right\|,
\end{aligned}
\end{equation}
\begin{equation}
\begin{aligned}
\sup_{n_{0}\leq n\leq N}\rho(A_{n,k})\leq \rho(A_{k})+\Delta
\leq \frac{1}{4}+\frac{3}{4}\rho\left(A_{k}\right),
\end{aligned}
\end{equation}
and 
\begin{equation}\label{ee51}
\begin{aligned}
\sup\limits_{n_{0}\leq n\leq N}\sum\limits_{i=k+1}^{p_{n}}\left|q_{n, i}\right|\leq \sum\limits_{i=k+1}^{\infty}\left|q_{i}\right|+\Delta
\leq \frac{1}{3}-\frac{1}{3}\rho(A_{k}).
\end{aligned}
\end{equation}
Hence, let $\epsilon=\frac{1-\rho\left(A_{k}\right)}{6\|A_{k}\|}$, $M=(1+\frac{2}{\epsilon})^{k-1}\sqrt{k}$ and $\lambda=\frac{1}{2}+\frac{1}{2}\rho(A_{k})$ in Lemma \ref{lea} in appendix, we get
\begin{equation}\label{ee52}
\begin{aligned}
\sup_{n_{0}\leq n\leq N}\left\|A_{n,k}^{i}\right\|\leq &M\lambda^{i},\;\;\forall i\geq 0.
\end{aligned}
\end{equation}

For each $n\geq 1$, set
$$\bar{u}_{n}=\left\{
		\begin{array}{rcl}
			&u_{n}, \;\;\;           & n\geq n_{0}-p_{n_{0}}+1\\   			              &0,\;\;\; & n < n_{0}-p_{n_{0}}+1
		\end{array} \right.,$$
 and $x_{n}=\left[\bar{u}_{n}, \bar{u}_{n-1},\cdots, \bar{u}_{n-k+1}\right]^{\top}$, $\epsilon_{n}=\left[\xi_{n}, 0,\cdots, 0\right]^{\top}$. 
 Besides, for each $n\geq 1$ and $i\geq 1$, set
$$g_{n,i}=
\left\{
		\begin{array}{rcl}
			&q_{n,i},\;\;\;           & 1\leq i\leq p_{n}\\   			              &0,\;\;\; & i > p_{n}
		\end{array} \right.,$$
		and
$$h_{n,i}=\left[\begin{matrix} 
-g_{n,n-i} & 0 &\cdots & 0\\ 
0 & 0 &  \cdots  & 0\\
\cdots & \cdots &\cdots & \cdots \\
0 & 0 &\cdots &0 
\end{matrix}\right].$$ 
Then by $(\ref{ee42})$, we have
\begin{equation}
x_{n}=A_{n,k}x_{n-1}+\sum_{i=1}^{n-k-1}h_{n,i}x_{i}+\epsilon_{n},\;\; n\geq n_{0}.
\end{equation}
Define
\begin{equation}
\Phi(n+1, i)=A_{n+1,k}\Phi(n, i), \Phi(i, i)=I, \forall n\geq i\geq 1.
\end{equation}
We now prove that for each $n\geq n_{0}+1$,
\begin{equation}\label{355}
\begin{aligned}
x_{n}
=\Phi(n, n_{0})x_{n_{0}}+\sum_{i=n_{0}+1}^{n}\Phi(n,i)\epsilon_{i}+\sum_{i=1}^{n-k-1}l_{n,i}x_{i}.
\end{aligned}
\end{equation}
where the sequence $\{l_{n,i}, 1\leq i\leq n\}$ is defined as follows:
\begin{equation}\label{111}
l_{n,i}=
\left\{
		\begin{array}{rcl}
			&\sum\limits_{j=k+i+1}^{n}\Phi(n,j)h_{j,i},& n_{0}-k\leq i\leq n-k-1\\   			              &\sum\limits_{j=n_{0}+1}^{n}\Phi(n,j)h_{j,i},& 1\leq i \leq n_{0}-k-1
		\end{array} \right..
		\end{equation}
Indeed, for $n=n_{0}+1$, we can easily verify that
\begin{equation}
l_{n_{0}+1,i}=h_{n_{0}+1,i},\;\;1\leq i\leq n-k-1,
\end{equation}
thus $(\ref{355})$ is satisfied. Assume that $(\ref{355})$ holds for some $n\geq n_{0}+1$, we need to show that it also holds for $n+1$. For this, we have
\begin{equation}
\begin{aligned}
&x_{n+1}=A_{n+1,k}x_{n}+\sum_{i=1}^{n-k}h_{n+1,i}x_{i}+\epsilon_{n+1}\\
=&A_{n+1,k}\Phi(n, n_{0})x_{n_{0}}+A_{n+1,k}\sum_{i=n_{0}+1}^{n}\Phi(n,i)\epsilon_{i}\\
&+A_{n+1,k}\sum_{i=1}^{n-k-1}l_{n,i}x_{i}+\sum_{i=1}^{n-k}h_{n+1,i}x_{i}+\epsilon_{n+1}\\
=&\Phi(n+1,n_{0})x_{n_{0}}+\sum_{i=n_{0}+1}^{n+1}\Phi(n+1,i)\epsilon_{i}+h_{n+1,n-k} x_{n-k}\\
&+\sum_{i=1}^{n-k-1}\left(A_{n+1,k}l_{n,i}+h_{n+1,i}\right)x_{i}.
\end{aligned}
\end{equation}
Notice that by $(\ref{111})$, we have $l_{n,n-k-1}=h_{n,n-k-1}$. Besides, for each $n_{0}-k\leq i \leq n-k$, we have
\begin{equation}
\begin{aligned}
A_{n,k}l_{n-1,i}+h_{n,i}=&A_{n,k}\sum\limits_{j=k+i+1}^{n-1}\Phi(n-1,j)h_{j,i}+h_{n,i}\\
=&\sum\limits_{j=k+i+1}^{n-1}\Phi(n,j)h_{j,i}+h_{n,i}=l_{n,i}.
\end{aligned}
\end{equation}
Meanwhile, for each $1\leq i \leq n_{0}-k-1$, we have
\begin{equation}
\begin{aligned}
A_{n,k}l_{n-1,i}+h_{n,i}
=\sum\limits_{j=n_{0}+1}^{n}\Phi(n,j)h_{j,i}=l_{n,i}.
\end{aligned}
\end{equation}
Hence, $(\ref{355})$ is derived for $n+1$, completing the inductive step.

According to $(\ref{ee52})$ and $(\ref{355})$, for each $n\geq n_{0}+1$, we have
\begin{equation}\label{e49}
\begin{aligned}
\|x_{n}\| \leq M\lambda^{n-n_{0}}\|x_{n_{0}}\|+\sum_{i=n_{0}+1}^{n}\lambda^{n-i}\|\epsilon_{i}\|+\sum_{i=1}^{n-k-1}\|l_{n,i}\|\|x_{i}\|.
\end{aligned}
\end{equation}
By the definition of $l_{n,i}$ and (\ref{ee51}), we get
\begin{equation}\label{e501}
\setlength\belowdisplayskip{3pt}
\begin{aligned}
&\sum_{i=1}^{n-k-1}\|l_{n,i}\|\\
\leq & \sum_{i=1}^{n_{0}-k-1}\sum\limits_{j=n_{0}+1}^{n}\lambda^{n-j}\|h_{j,i}\|+\sum_{i=n_{0}-k}^{n-k-1}\sum\limits_{j=k+i+1}^{n}\lambda^{n-j}\|h_{j,i}\|\\
\leq &\sum_{i=1}^{n_{0}-k-1}\sum\limits_{j=n_{0}+1}^{n}\lambda^{n-j}\left|g_{j, j-i}\right|+\sum_{i=n_{0}-k+1}^{n-k-1}\sum\limits_{j=k+i+1}^{n}\lambda^{n-j}\left|g_{j, j-i}\right|\\
\leq &\sum_{j=n_{0}+1}^{n}\sum\limits_{i=1}^{j-k-1}\lambda^{n-j}\left|g_{j, j-i}\right|
\leq \sum_{j=n_{0}+1}^{n}\lambda^{n-j}\left(\sum\limits_{i=k+1}^{\infty}\left|q_{j, i}\right|\right)\\
\leq &\frac{1}{1-\lambda}\max_{n_{0}+1\leq j \leq n}\left(\sum\limits_{i=k+1}^{\infty}\left|q_{j, i}\right|\right)\leq \frac{2}{3}.
\end{aligned}
\end{equation}
Let
$y_{n}=\max\limits_{1\leq i\leq n}\|x_{i}\|.$
From $(\ref{e49})$ and $(\ref{e501})$,  for each $n\geq n_{0}+1$, we have
\begin{equation}
\begin{aligned}
y_{n}\leq &\frac{2}{3}y_{n-1}+M\lambda^{n-n_{0}}\|x_{n_{0}}\|+\sum_{i=n_{0}+1}^{n}\lambda^{n-i}\|\epsilon_{i}\|\\
 \leq &\left(\frac{2}{3}\right)^{n-n_{0}}y_{n_{0}}+\frac{M}{1-\lambda}\|x_{n_{0}}\|+c_{2}\\
 \leq & c_{1}\max_{n_{0}-p_{n_{0}}\leq i\leq n_{0}}\left|u_{i}\right|+c_{2}.
 \end{aligned}
\end{equation}
where $c_{i}, i=1, 2$, do not dependent on $n_{0}$ and $N$. 
Therefore, Lemma \ref{lem22} is obtained.
\end{proof}

{\it Proof of Theorems $\ref{thm2}$.}
Based on Lemma \ref{lem33} and  Lemma \ref{lem22}, the proof framework is based on the following three cases, with details provided elsewhere:
\begin{itemize}
\item[(1)] $\sigma_{k}<\infty$, $\tau_{k+1}=\infty$ for some $k$;
\item[(2)] $\tau_{k}<\infty$, $\sigma_{k}=\infty$ for some $k$;
\item[(3)] $\tau_{k}<\infty$, $\sigma_{k}<\infty$ for all $k$.
\end{itemize}

	\section{Numerical simulation}\label{se5}	
	In this section, we present a simulation example to illustrate the theoretical results obtained in the previous sections.
	
	Consider the following stochastic dynamic system under binary-valued output observations:
	\begin{equation}\label{eq11}
\left\{
\begin{aligned}
		  &A(z)y_{k+1}=B(z)u_{k}+v_{k+1}\\
		&s_{k+1}=I\left(y_{k+1} \geq c_{k}\right)     
		\end{aligned} \right., \;\;\;\;\;k\geq 0,
\end{equation}
where $A(z)=1-0.1z+0.5z^{2}, B(z)=1+0.5z-0.4z^{2}$.
The  noise sequence  $\left\{ v_{k}, k\geq 1\right\}$ is i.i.d. with standard normal distribution $\mathcal{N}(0,1)$. To estimate the parameter vector $\theta$ using Algorithm $\ref{alg1}$, we set the initial value as $\hat{\theta}_{0}(n)=[0, \cdots, 0]^{\top}$ and $P_{0}(n)=I$ for each $n\geq 1$.

The reference signals $\left\{y_{k}^{*}\right\}$ are generated by the following logistic map, which is in part as a discrete-time demographic model:
\begin{equation}\label{etar}
y_{k+1}^{*}=ry_{k}^{*}(1-y_{k}^{*}),\;\;k\geq 0,
\end{equation}
 where the initial value is set to $y_{0}^{*}=0.7$. It is well known that when different values of $r$ are selected within the range of $(0, 4)$, the system signal $\{y_{k}^{*}, k\geq 0\}$ exhibits different behaviors: 1) When $r\in (0,3]$, $y_{k}^{*}$ converges to a constant value; 2) When $r$ is within $\left(3, r^{*}\right)$ where $r^{*}=3.56994\cdots$, $y_{k}^{*}$ exhibit a periodic behavior; 3) When $r$
belongs to the interval $\left[r^{*}, 4\right)$, $y_{k}^{*}$ generally displays chaotic behavior. Below, we validate the tracking performance of the proposed adaptive controller for these three different types of reference signals, respectively.

\subsubsection*{\bf Case 1} Consider the case where $r=1.5$; in this scenario, the reference signals generated by $(\ref{etar})$ quickly converge to the value $\frac{1}{3}$. Fig.~\ref{fig_6} compares the tracking performance of our adaptive controller $(\ref{control})$-$(\ref{200})$ with the SA-based controller in \cite{wenxiao2023} for this type of reference signal, under both specifically designed binary observation thresholds $c_{k}$ and general observation thresholds $c_{k}$. a) When the binary observation thresholds $c_{k}$ are specifically designed to match the reference value of $\frac{1}{3}$, both controllers converge to the reference signal. b) When the binary observation thresholds $c_{k}$ can be set arbitrarily, we may take it as 0.8 for illustration. It can be observed that the system output under the SA-based controller in \cite{wenxiao2023} stabilizes at the observation threshold of $0.8$, failing to track the reference signal, while the system output under our adaptive controller can make the adaptive tracking error asymptotically to be zero. 

\begin{figure}[!t]
\centering
\subfloat[Specially designed binary observation thresholds]{
		\includegraphics[scale=0.45]{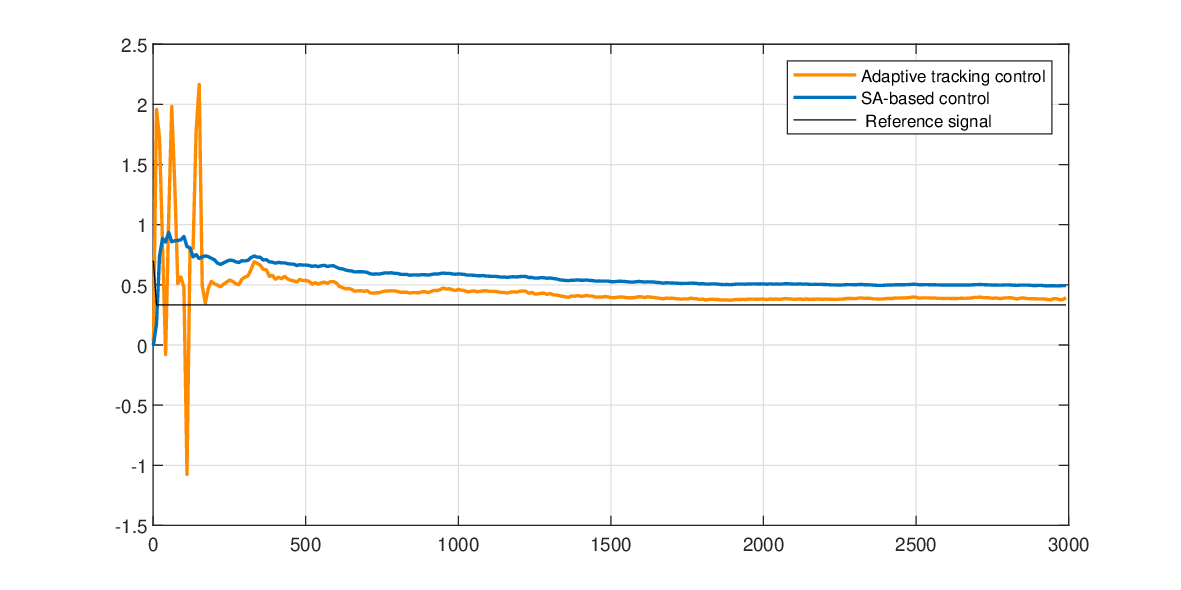}}
		\hfill
\subfloat[General binary observation thresholds]{
		\includegraphics[scale=0.45]{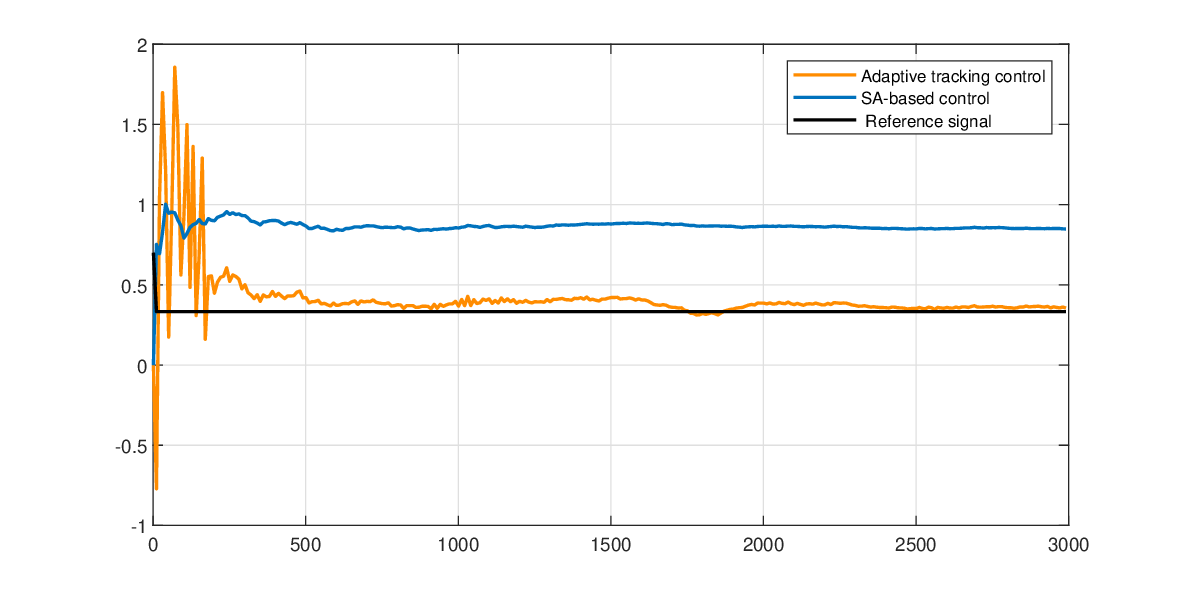}}
\caption{$r=1.5$: Adaptive tracking control for fixed reference signals. a) Specially designed binary observation thresholds b) General binary observation thresholds.}
\label{fig_6}
\end{figure}

\subsubsection*{\bf Case 2} Consider the case where $r=3.44$. In this situation, the reference signals generated by $(\ref{etar})$ will exhibit periodic oscillations between four values. In the experiment, the observation thresholds are set to $c_{k}=0$, and the adaptive controller $u_{k}$ is provided by $(\ref{control})$-$(\ref{200})$. Fig.~$\ref{fig_2}$ $(a)$-$(b)$ presents the tracking results: the upper subfigure shows the trajectories of the system outputs and the reference signals, from which we can see that the system outputs are able to track the variations in the reference signals; The lower subfigure clearly illustrates that the tracking error converges to zero.
\begin{figure}[!t]
\centering
\subfloat[Trajectories of reference signals and system outputs]{
		\includegraphics[scale=0.45]{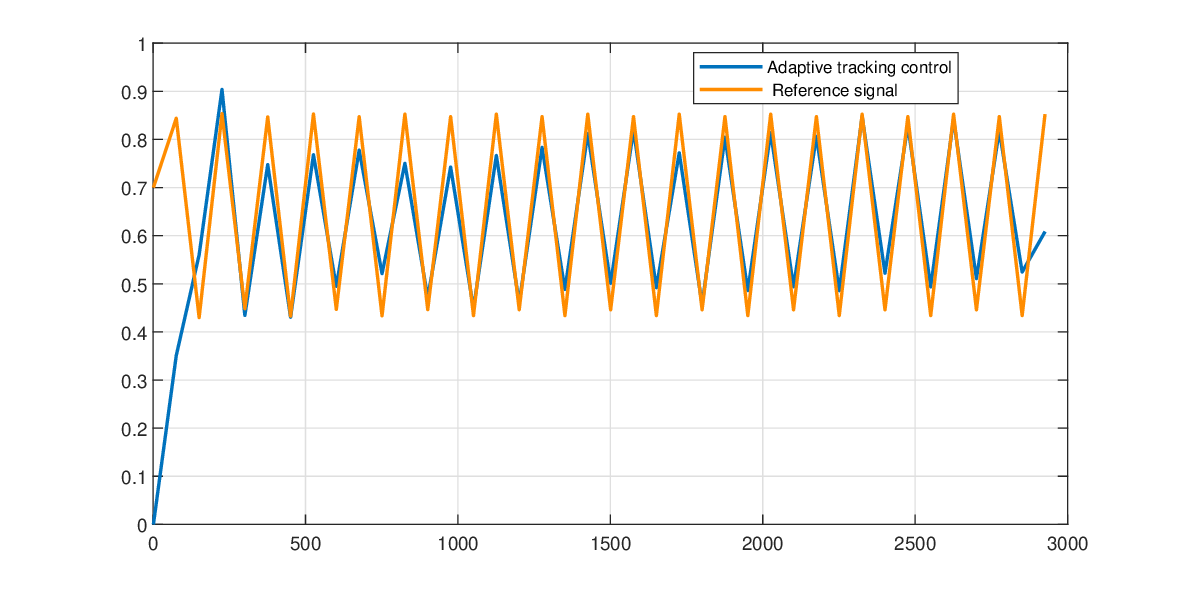}}
		\hfill
\subfloat[Tracking error]{
		\includegraphics[scale=0.45]{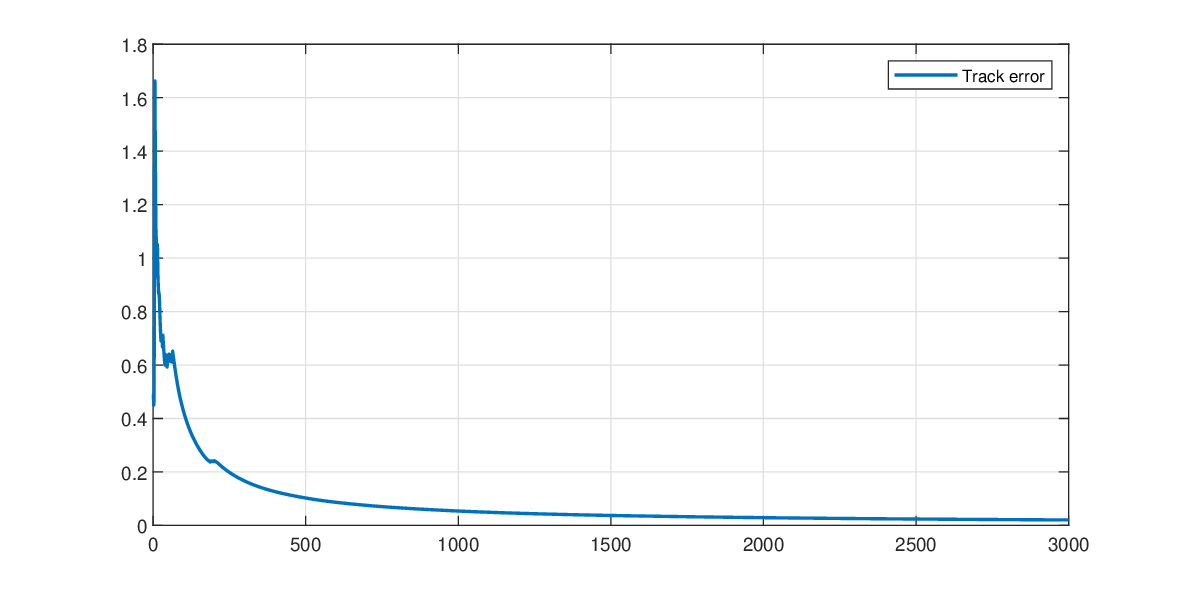}}
\caption{$r=3.44$: Adaptive tracking control for periodic reference signals.}
\label{fig_2}
\end{figure}

\subsubsection*{\bf Case 3} Consider the case where $r=3.8$.  In this scenario, the reference signal generated from $(\ref{etar})$ exhibits chaotic behavior. Fig.~$\ref{fig_3}$ $(a)$-$(b)$ show the trajectories of the reference signals, the system outputs under the adaptive controller $(\ref{control})$-$(\ref{200})$, as well as the tracking error. It can be seen that for these complicated reference signals, the adaptive tracking error can still converge to zero.

\begin{figure}[!t]
\centering
\subfloat[Trajectories of reference signals and system outputs]{
		\includegraphics[scale=0.45]{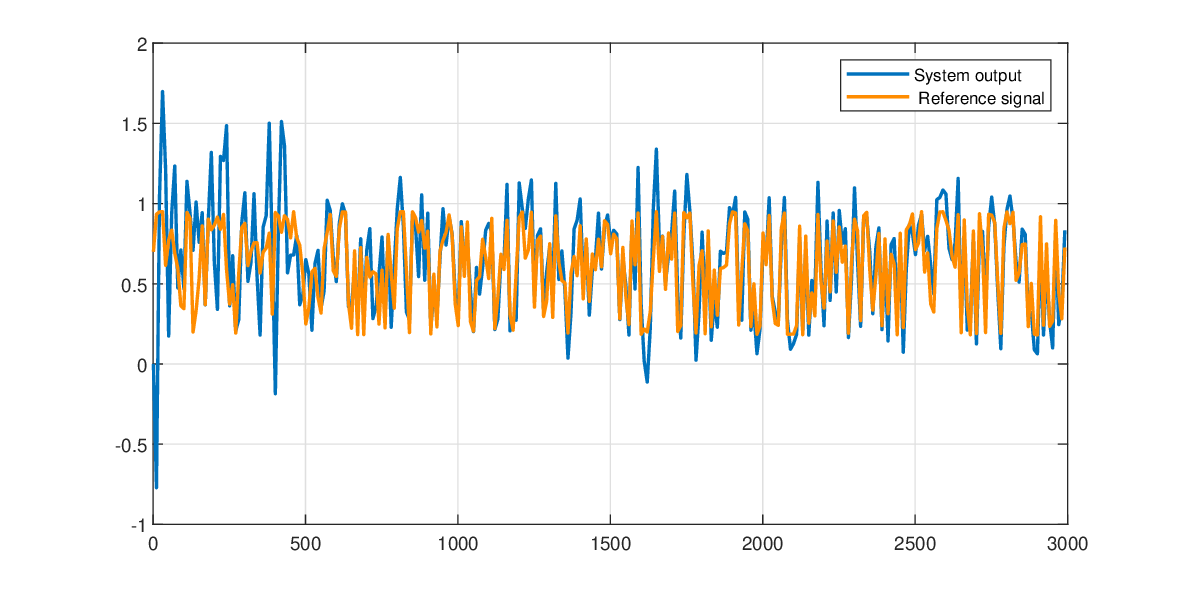}}
		\hfill
\subfloat[Tracking error]{
		\includegraphics[scale=0.45]{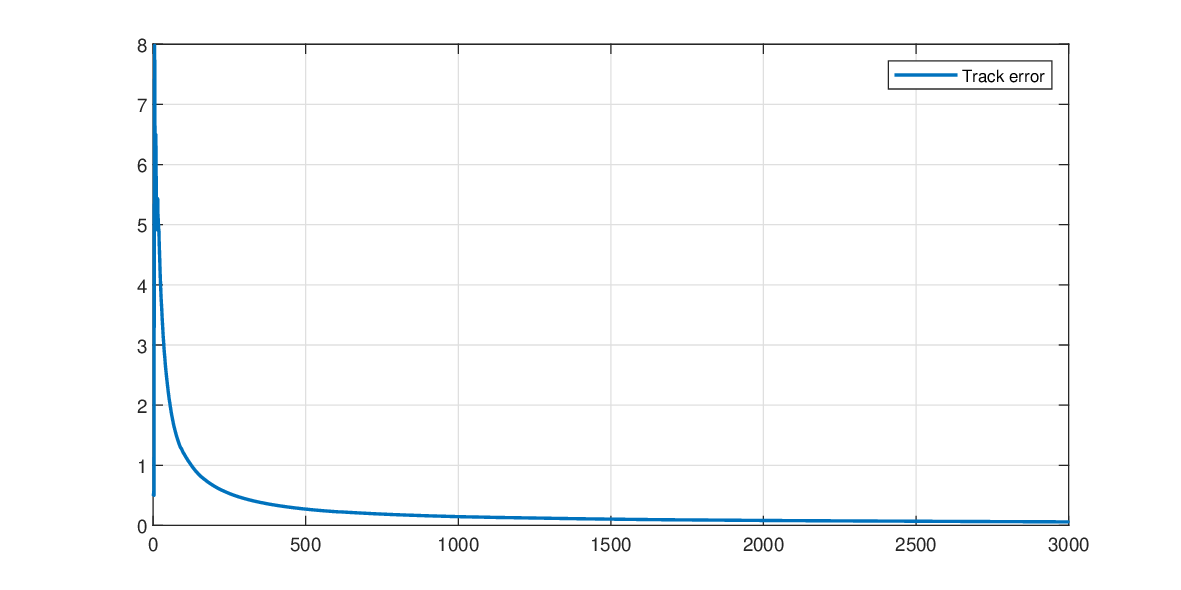}}
\caption{$r=3.8$: Adaptive tracking control for chaotic reference signals.}
\label{fig_3}
\end{figure}



	\section{Concluding remarks}\label{se6}	
In this paper, we have established an adaptive control theory for finite-dimensional linear systems with any given bounded reference signals under binary-valued output observations. As explained in the paper, one of the difficulties in establishing such a theory lies in the fact that the regression vectors containing the lagged output signals to be used in the construction of the traditional adaptive algorithms are not fully available. Another difficulty is how to sidestep or verify both the boundedness and the excitation conditions required in the convergence analysis of adaptive estimation algorithms in the existing related literature. To overcome these difficulties, we have used both the ideas of large model estimation and the theory of double array martingales in the design and analysis of adaptive algorithms, which makes it possible for us to establish for the first time an adaptive identification theory for linear systems with binary-valued output observations under non-PE conditions and possibly unbounded system signals. This further allows us to establish the global stability and asymptotic optimality of the closed-loop adaptive systems. We remark that various extensions of the current paper are possible, including the general case of quantized output observations. However, there are many interesting problems that need to be further investigated. At the moment, we do not know whether or not it is possible to establish such a theory without using the IIR-based design ideas and analysis techniques introduced in this paper. It would also be challenging problems to consider more complicated stochastic dynamical systems including multi-layer networks and other basic control problems such as linear-quadratic-adaptive control under binary-valued output observations.	
\begin{appendix}\label{ap}

\section{Proof of Theorem \ref{le1}}\label{BB} 
\begin{lemma}\label{lea}(\cite{g2020}).
Suppose $\rho\left(A\right)$ is the spectral radius of the matrix $A\in \mathbb{R}^{p\times p}$, then for any $\epsilon>0$, we have
\begin{equation}
\|A^{i}\|\leq M\lambda^{i}, \;\;\forall i\geq 0,
\end{equation}
where $M=\left(1+\frac{2}{\epsilon}\right)^{p-1}\sqrt{p},\;\;\lambda=\rho(A)+\epsilon\|A\|.$
\end{lemma}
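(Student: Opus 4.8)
\emph{Proposal.} This is a classical perturbation bound on matrix powers, and the plan is to reduce it to the upper-triangular case by Schur triangularization and then tame the strictly-upper-triangular (nilpotent) part by a diagonal similarity. Write $\|\cdot\|$ for the norm appearing in the statement and $\|\cdot\|_2$ for the spectral norm, so that $\|B\|_2\le\|B\|\le\sqrt{p}\,\|B\|_2$ for every $B\in\mathbb{R}^{p\times p}$. By the Schur decomposition there is a unitary $U$ with $A=UTU^{*}$, where $T$ is upper triangular whose diagonal entries are the eigenvalues of $A$; decompose $T=\Lambda+N$ with $\Lambda$ the diagonal (eigenvalue) part and $N$ strictly upper triangular. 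Since unitary conjugation preserves the spectral norm, $\|\Lambda\|_2=\rho(A)$ and every entry of $N$ is bounded in absolute value by $\|T\|_2=\|A\|_2$.

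The key step is to choose a diagonal scaling $D=\mathrm{diag}(1,\delta,\delta^{2},\dots,\delta^{p-1})$ with $0<\delta<1$ and consider $\widetilde{T}:=D^{-1}TD$. Conjugation leaves $\Lambda$ unchanged and multiplies the $(j,k)$ entry of $N$ by $\delta^{k-j}$, so the scaled nilpotent part $\widetilde{N}$ has its $\ell$-th superdiagonal equal to $\delta^{\ell}$ times that of $N$. Bounding each superdiagonal (a weighted shift) by its largest entry and summing the resulting geometric series gives $\|\widetilde{N}\|_2\le\|A\|_2\sum_{\ell=1}^{p-1}\delta^{\ell}\le\|A\|_2\,\frac{\delta}{1-\delta}$. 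Taking $\delta=\epsilon/(1+\epsilon)$ makes $\frac{\delta}{1-\delta}=\epsilon$, whence $\|\widetilde{T}\|_2\le\|\Lambda\|_2+\|\widetilde{N}\|_2\le\rho(A)+\epsilon\|A\|_2\le\rho(A)+\epsilon\|A\|=\lambda$.

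It then remains to assemble the estimate. Writing $A^{i}=U D\,\widetilde{T}^{\,i}\,D^{-1}U^{*}$ and using unitary invariance of the norm together with submultiplicativity and the norm equivalence above, I obtain $\|A^{i}\|\le\|D\|_2\,\|D^{-1}\|_2\,\sqrt{p}\,\|\widetilde{T}\|_2^{\,i}\le\delta^{-(p-1)}\sqrt{p}\,\lambda^{i}$, since $\|D\|_2=1$ and $\|D^{-1}\|_2=\delta^{-(p-1)}$. Finally $\delta^{-1}=1+1/\epsilon\le1+2/\epsilon$ yields $\|A^{i}\|\le\sqrt{p}\,(1+2/\epsilon)^{p-1}\lambda^{i}=M\lambda^{i}$, as claimed (the stated constant $M$ is in fact not tight, with this choice of $\delta$). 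The only delicate point, and the place where the constants are won or lost, is the joint calibration in the scaling step: $\delta$ must be small enough that the off-diagonal part is dominated by $\epsilon\|A\|$, yet the inevitable conditioning factor $\delta^{-(p-1)}$ of the non-unitary similarity $D$ must stay within $(1+2/\epsilon)^{p-1}$. Keeping careful track of which similarity preserves the norm (the unitary $U$) and which does not (the scaling $D$) is the crux of the argument.
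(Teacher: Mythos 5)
Your proof is correct: the paper states this lemma without proof (citing \cite{g2020}), and your Schur-triangularization-plus-diagonal-scaling argument is exactly the standard derivation behind the cited bound, with the constant $M=(1+\frac{2}{\epsilon})^{p-1}\sqrt{p}$ emerging from $\|D^{-1}\|_{2}=\delta^{-(p-1)}$, $\delta^{-1}\le 1+\frac{2}{\epsilon}$, and the comparison $\|\cdot\|\le\sqrt{p}\,\|\cdot\|_{2}$. Your calibration $\delta=\epsilon/(1+\epsilon)$ is sound (indeed slightly generous, since $\delta=\epsilon/(2+\epsilon)$ gives $\delta^{-1}=1+\frac{2}{\epsilon}$ exactly and only costs $\epsilon\|A\|/2$ on the superdiagonals), so the stated constants hold and nothing further is needed.
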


\begin{lemma}\label{lem3} (\cite{g1990}). Let $\{f_{k}(n)\}, 1\leq k\leq n$ be a $\mathcal{F}_{k}$-measurable random vector sequence in $\mathbb{R}^{p_{n}}, p_{n}\geq 1,$ and $M_{k}(n)=I+\sum\limits_{j=1}^{k}f_{j}(n)f_{j}^{\top}(n), 1\leq k\leq n.$ Then as $n\rightarrow \infty,$
\begin{equation}
\sum_{k=1}^{n}f_{k}^{\top}(n)M_{k}^{-1}(n)f_{k}(n)=O\left(\log^{+}\{\det\left(M_{n}(n)\right)\}+1\right)
\end{equation}
where $det\left(M_{n}(n)\right)$ denotes the determinant of $M_{n}(n)$, and $log^{+}\{\cdot\}$ denotes the positive part of $\log\{\cdot\}$.
\end{lemma}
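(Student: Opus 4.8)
The plan is to prove the sharper pathwise bound
\[
\sum_{k=1}^{n} f_{k}^{\top}(n) M_{k}^{-1}(n) f_{k}(n) \;\leq\; \log \det M_{n}(n),
\]
from which the stated $O\bigl(\log^{+}\det M_{n}(n)+1\bigr)$ estimate is immediate: since $M_{n}(n)=I+\sum_{j}f_{j}f_{j}^{\top}$ is $I$ plus a positive-semidefinite matrix, all its eigenvalues are at least $1$, so $\det M_{n}(n)\geq 1$ and hence $\log\det M_{n}(n)=\log^{+}\det M_{n}(n)\geq 0$. I would note at the outset that the measurability hypothesis and the asymptotic framing are inessential here: the inequality I intend to establish holds deterministically for every realization, so I would suppress the index $n$ and write simply $f_{k}$, $M_{k}$.

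First I would record the one-step recursion $M_{k}=M_{k-1}+f_{k}f_{k}^{\top}$ with $M_{0}=I$, and set $a_{k}=f_{k}^{\top}M_{k-1}^{-1}f_{k}\geq 0$. Two classical identities drive the argument. By the matrix determinant lemma,
\[
\det M_{k}=\det M_{k-1}\,(1+a_{k}),
\]
so that $\log(1+a_{k})=\log\det M_{k}-\log\det M_{k-1}$ telescopes. By the Sherman--Morrison formula applied to $M_{k}^{-1}=(M_{k-1}+f_{k}f_{k}^{\top})^{-1}$, a direct computation gives
\[
f_{k}^{\top}M_{k}^{-1}f_{k}=a_{k}-\frac{a_{k}^{2}}{1+a_{k}}=\frac{a_{k}}{1+a_{k}}.
\]

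The crux is then the elementary inequality $\tfrac{x}{1+x}\leq\log(1+x)$ for all $x\geq 0$, which yields $f_{k}^{\top}M_{k}^{-1}f_{k}\leq\log(1+a_{k})$ for each $k$. Summing over $1\leq k\leq n$ and telescoping the right-hand side, using $\det M_{0}=\det I=1$, gives
\[
\sum_{k=1}^{n} f_{k}^{\top}M_{k}^{-1}f_{k}\;\leq\;\sum_{k=1}^{n}\bigl(\log\det M_{k}-\log\det M_{k-1}\bigr)=\log\det M_{n},
\]
which is the desired conclusion.

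I do not anticipate a genuine obstacle; the argument is a short deterministic computation. The one point deserving care is the indexing: the summand uses the \emph{current} information matrix $M_{k}^{-1}$ rather than $M_{k-1}^{-1}$, and this is precisely what makes the per-step quantity equal $\tfrac{a_{k}}{1+a_{k}}$ (which is at most $1$) instead of $a_{k}$, so that the bound by $\log(1+a_{k})$ is valid --- the inequality $a_{k}\leq\log(1+a_{k})$ would be false. I would therefore apply Sherman--Morrison carefully to $M_{k}^{-1}$ and keep it distinct from $M_{k-1}^{-1}$ throughout.
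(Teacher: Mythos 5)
Your proof is correct, and it is essentially the standard argument behind this lemma: the paper itself gives no proof, deferring to the cited reference \cite{g1990}, where the result is established by exactly this determinant telescoping --- the matrix determinant lemma, the Sherman--Morrison identity $f_k^{\top}M_k^{-1}f_k = a_k/(1+a_k)$, and the bound $x/(1+x)\leq\log(1+x)$. Your observation that the $\mathcal{F}_k$-measurability and the double index $n$ are inessential (the inequality is pathwise and holds for each fixed $n$ with absolute constant $1$) is also accurate, and your care about using $M_k^{-1}$ rather than $M_{k-1}^{-1}$ is precisely the point on which the argument turns.
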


\begin{lemma}\label{lem2} (\cite{g1990}). Suppose that $\left\{w_{n}, \mathcal{F}_{n}\right\}$ is a martingale difference sequence satisfying	\begin{equation}\label{18}
		\sup_{n} 	\mathbb{E}[|w_{n+1}|^{2}\mid \mathcal{F}_{n}] < \infty,
		\end{equation}
		\begin{equation}
		 \sup_{n} 	\mathbb{E}[|w_{n+1}|^{4}(\log^{+}|w_{n}|)^{2+\delta}] < \infty,\;\;a.s.
	\end{equation}
and that $\{x_{n},\mathcal{F}_{n}\}$ is any adapted random vector sequence satisfying
\begin{equation}
\sum_{i=1}^{n}\|x\|^{2}=O(n^{b}), \;\;\mathbb{E}\|x_{n}\|^{4}\{\log^{+}(\|x_{n}\|)\}^{2+\delta}=O(n^{2(b-1)}),
\end{equation}
where $\delta>0, b\geq 1$ are some costants.
Then as $n\rightarrow \infty$:
	\begin{equation}\label{19}
		\max_{1\leq t \leq n}\max_{1\leq i\leq n}\left\|\sum_{j=1}^{i}x_{j-t}w_{j}\right\|= O\left(n^{\frac{b}{2}}\{\log n\}^{\frac{1}{2}}\right)\;a.s., \forall \eta >0.
	\end{equation}
\end{lemma}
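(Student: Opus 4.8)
The plan is to treat the shift index $t$ as fixed first, exploit the martingale structure in the summation index $j$, and then pay for the uniformity over $t\in\{1,\dots,n\}$ by an exponential deviation bound rather than by a moment bound. Fix $t\ge 1$ and set $M_i^{(t)}=\sum_{j=1}^{i}x_{j-t}w_j$ (with $x_m:=0$ for $m\le 0$). Since $\{x_n,\mathcal F_n\}$ is adapted, $x_{j-t}$ is $\mathcal F_{j-1}$-measurable, so $\{x_{j-t}w_j,\mathcal F_j\}$ is a martingale difference sequence and $\{M_i^{(t)},\mathcal F_i\}$ is a martingale in $i$. The inner maximum $\max_{1\le i\le n}\|M_i^{(t)}\|$ is therefore controlled by Doob's maximal inequality, and its predictable quadratic variation satisfies $\langle M^{(t)}\rangle_n=\sum_{j=1}^{n}\|x_{j-t}\|^2\,\mathbb E[w_j^2\mid\mathcal F_{j-1}]\le K\sum_{j=1}^{n}\|x_{j-t}\|^2=O(n^{b})$ a.s., using $(\ref{18})$ and $\sum_i\|x_i\|^2=O(n^b)$. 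The genuine difficulty is the \emph{outer} maximum over the $n$ shifts $t$: a Chebyshev/Burkholder estimate using the four available moments loses a factor $\asymp n$ from the union over $t$, which the modest $(\log n)^{1/2}$ gain in the target $(\ref{19})$ cannot absorb.

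To beat this I would first truncate. Split $w_j=w_j'+w_j''$ with $w_j'=w_j\mathbf{1}(|w_j|\le b_j)$ at a level $b_j$ that is a small power of $j$, and truncate $\|x_{j-t}\|$ analogously. The log-weighted fourth-moment hypotheses $\sup_j\mathbb E[|w_{j}|^4(\log^+|w_{j}|)^{2+\delta}]<\infty$ and $\mathbb E\|x_j\|^4\{\log^+\|x_j\|\}^{2+\delta}=O(j^{2(b-1)})$ are precisely calibrated so that the discarded tails are negligible: by Borel--Cantelli and Kronecker's lemma the contribution $\sum_{j=1}^{n}\|x_{j-t}\|\,|w_j|\,\mathbf{1}(\text{either variable exceeds its level})$ is $o\bigl(n^{b/2}(\log n)^{1/2}\bigr)$ a.s., uniformly in $t$. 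What remains is a martingale $\widetilde M_i^{(t)}$ with \emph{bounded} increments and with predictable quadratic variation still $O(n^b)$.

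For the bounded-increment part I would apply an exponential martingale inequality of Azuma--Hoeffding/Freedman type, componentwise, on the event $\{\langle\widetilde M^{(t)}\rangle_n\le C n^b\}$:
\begin{equation}
P\Big(\max_{1\le i\le n}\|\widetilde M_i^{(t)}\|>\kappa\,n^{b/2}(\log n)^{1/2}\Big)\le C\exp\!\Big(-c\kappa^2\log n\Big)=C\,n^{-c\kappa^2},
\end{equation}
where the truncation levels are chosen so that the increment-times-deviation term stays dominated by the quadratic-variation term (feasible since $b\ge1$ gives $n^{b/2}\ge n^{1/2}$). Taking the union over $t\in\{1,\dots,n\}$ yields a bound $\le C\,n^{1-c\kappa^2}$, which for $\kappa$ large enough is summable along the geometric subsequence $n_m=2^m$; Borel--Cantelli then gives the a.s. bound at $n=n_m$, and monotonicity of the partial maxima together with the comparability $n_{m+1}^{b/2}(\log n_{m+1})^{1/2}\asymp n_m^{b/2}(\log n_m)^{1/2}$ fills the gaps to give $(\ref{19})$ for all $n$.

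The step I expect to be the main obstacle is exactly this uniformity over the $n$ shifts. The whole design of the hypotheses is driven by it: four moments alone do not suffice for a pure Chebyshev union bound, so the argument must first extract a bounded-increment martingale (which is why the fourth-moment conditions carry the extra logarithmic weights, ensuring summable truncation tails) and only then invoke an exponential inequality whose $n^{-c\kappa^2}$ tail can be made to overwhelm the factor $n$ from the union. Getting the truncation levels, the quadratic-variation control, and the exponent $c\kappa^2>1$ to fit together simultaneously is where the delicate bookkeeping lies; this is essentially the double-array maximal-inequality argument of \cite{g1990}.
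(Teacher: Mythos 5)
You should first note a point of comparison that is slightly unusual: the paper does not prove this lemma at all --- it is quoted from \cite{g1990} and used as a black box --- so the relevant benchmark is the argument in that cited source. Your plan (freeze the shift $t$ so that $x_{j-t}$ is $\mathcal{F}_{j-1}$-measurable and $\sum_{j\leq i}x_{j-t}w_{j}$ is a martingale; truncate both arrays at polynomially growing levels whose exceedance probabilities are made summable exactly by the $(\log^{+})^{2+\delta}$-weighted fourth moments; apply a Freedman-type exponential bound on the event that the predictable variation is $O(n^{b})$; pay the factor $n$ from the union over shifts with the $n^{-c\kappa^{2}}$ tail; and fill in between dyadic times $n_{m}=2^{m}$) is indeed the same truncation-plus-exponential-inequality scheme as in \cite{g1990}, and your diagnosis of why a pure moment/union-bound argument fails is correct. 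The calibration you describe is real: with levels $b_{j}\asymp j^{1/4}(\log j)^{-1/4}$ for $w_{j}$ and $c_{i}\asymp i^{(b-1)/2+1/4}(\log i)^{-1/4}$ for $\|x_{i}\|$, both tail series are of order $\sum_{j} j^{-1}(\log j)^{-(1+\delta)}$, hence summable precisely because $\delta>0$, while the product of the two levels is $\asymp n^{b/2}(\log n)^{-1/2}$, exactly the largest increment size compatible with the exponent $c\kappa^{2}\log n$ in Freedman's inequality.

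There is, however, one genuine gap in the write-up: after truncation, $w_{j}'=w_{j}\mathbf{1}(|w_{j}|\leq b_{j})$ is \emph{not} a martingale difference, so ``what remains is a martingale with bounded increments'' is false as stated. You must recenter, $\tilde{w}_{j}=w_{j}'-\mathbb{E}[w_{j}'\mid\mathcal{F}_{j-1}]$, and then control the drift $\sum_{j\leq n}\|x_{j-t}\|\,\bigl|\mathbb{E}[w_{j}\mathbf{1}(|w_{j}|>b_{j})\mid\mathcal{F}_{j-1}]\bigr|$, which your Borel--Cantelli/Kronecker step does not cover (that step only disposes of the finitely many actual exceedances). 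This drift is exactly where the borderline truncation level bites: using only the conditional second moment from $(\ref{18})$ one gets $\mathbb{E}[|w_{j}|\mathbf{1}(|w_{j}|>b_{j})\mid\mathcal{F}_{j-1}]\leq K/b_{j}$, and with $b_{j}\asymp j^{1/4}$ Cauchy--Schwarz yields a drift of order $n^{b/2}\,n^{1/4}$, which swamps the target $n^{b/2}(\log n)^{1/2}$. The repair is to read the fourth-moment hypothesis in its conditional form (as in the cited source --- the stray ``a.s.'' in the displayed condition signals that the conditional version is intended), which gives $\mathbb{E}[|w_{j}|\mathbf{1}(|w_{j}|>b_{j})\mid\mathcal{F}_{j-1}]=O\bigl(j^{-3/4}(\log j)^{3/4}\bigr)$ via the $b_{j}^{-3}$ tail, and hence a drift of order $n^{b/2}$, which is acceptable. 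With that recentering step inserted (and the random constant in $\langle\widetilde{M}^{(t)}\rangle_{n}=O(n^{b})$ handled by a countable union over levels, which Freedman's ``$\max>\lambda$ and $\langle M\rangle\leq V$'' formulation accommodates, as you implicitly note), your outline is a faithful reconstruction of the proof in \cite{g1990}.
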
 

\end{appendix}


\begin{thebibliography}{20}
		\expandafter\ifx\csname natexlab\endcsname\relax\def\natexlab#1{#1}\fi
		\expandafter\ifx\csname url\endcsname\relax
		\def\url#1{\texttt{#1}}\fi
		\expandafter\ifx\csname urlprefix\endcsname\relax\def\urlprefix{URL }\fi
	

\bibitem{rw2000}
Y. S. Jeon, N. Lee, S. N. Hong and R. W. Heath, 
``One-bit sphere decoding for uplink massive MIMO systems with one-bit ADCs." 
\emph{IEEE Transactions on Wireless Communications}, 
vol. 17, no. 7, pp. 4509-4521, Jul. 2018.
	
\bibitem{carli2010}
M. Shao, W. K. Ma, 
``Binary MIMO detection via homotopy optimization and its deep adaptation",
\emph{IEEE Transactions on Signal Processing}, 
vol. 69, pp. 781-796, Dec. 2020. 

	
\bibitem{ba2014}	
F. Bach, 
``Adaptivity of averaged stochastic gradient descent to local strong convexity for logistic regression", 
\emph{The Journal of Machine Learning Research}, vol. 15, no. 1, pp. 595-627, 2014.	
	
		\bibitem{mei2018}
	S. Mei, Y. Bai, A. Montanari. 
	``The landscape of empirical risk for nonconvex losses",
	 \emph{The Annals of Statistics}, 
	 vol. 46, no. 6A, pp. 2747-2774, Dec. 2018. 
	
	
	
              \bibitem{sj2004}
		J. Sun, Y. W. Kim and L. W,
		``Aftertreatment control and adaptation for automotive lean burn engines with HEGO sensors,"
		\emph{International Journal of Adaptive Control and Signal Processing},
		vol. 18, no. 2, pp. 145-166, Mar. 2004.

           \bibitem{ls2001}
L. Schwiebert and L. Y. Wang. 
``Robust control and rate coordination for efficiency and fairness in ABR traffic with explicit rate marking."
\emph{Computer communications}, vol. 24, no. 13, pp. 1329-1340, Aug. 2001.






		
				
		
		\bibitem{ZG2003}
		L. Y. Wang, J. F. Zhang and  G. G. Yin, 
		``System identification using binary sensors,''
		\emph{ IEEE Transactions on Automatic Control},  vol. 48, no. 11, pp. 1892-1907, Nov. 2003.
		
			\bibitem{godoy}
		B. I. Godoy, G. C. Goodwin,  J. C. Ag\"uero, D. Marelli and T. Wigren, 
		``On identification of FIR systems having quantized output data", 
		\emph{Automatica}, vol. 47, no. 9, pp. 1905-1915, 2011.
		
		
			\bibitem{MC2011}
		M. Casini, A. Garulli and  A. Vicino, 
		``Input design in worst-case system identification using binary sensors,''
		\emph{IEEE Transactions on Automatic Control},  vol. 56, no. 5, pp. 1186-1191, May. 2011.
	
		\bibitem{you2015}
		K. You, 
		  ``Recursive algorithms for parameter estimation with adaptive quantizer," 
		{\em Automatica}, vol. 52, pp. 192--201, Feb. 2015.
		
		\bibitem{GZ2013}
		J. Guo  and Y. Zhao, 
		``Recursive projection algorithm on FIR system identification with binary-valued observations,''
		\emph{Automatica}, vol. 49, no. 11, pp. 3396-3401, Nov. 2013.
				
		\bibitem{ZZ2022} 
		 L. T. Zhang,  Y. Zhao and  L. Guo,
		``Identification and adaptation with binary-valued observations under non-persistent excitation,''
		\emph{Automatica}, vol. 138: 110158,  Apr. 2022. 
	
	\bibitem{zztsqn}
L. T. Zhang and L. Guo, 
 ``Adaptive identification with guaranteed
  performance under saturated observation and non-persistent excitation", 
  \emph{IEEE Transactions on Automatic Control}, vol. 69, no. 3, pp. 1584-1599, Mar. 2023.

		\bibitem{bercu}
 B. Bercu, A. Godichon, and B. Portier, 
`` An efficient stochastic Newton algorithm for parameter estimation in logistic regressions", 
  \emph{SIAM Journal on Control and Optimization}, vol. 58, no. 1, pp. 348-367, 2020.

	\bibitem{nair2004} 
	G. N. Nair and R. J. Evans,
	``Stabilizability of stochastic linear systems with finite feedback data rates,"
	 \emph{ SIAM Journal on Control and Optimization}, vol. 43, no. 2, pp. 413–436, 2004.
	
	\bibitem{liberzon2014} 
	D. Liberzon,
	``Finite data-rate feedback stabilization of switched and hybrid systems,"
	\emph{Automatica}, vol. 50, pp. 409–420, 2014.
	
		\bibitem{kash2007} 
	A. Kashyap, T. Ba\c{s}ar, and R. Srikant, ``Quantized consensus," 
	\emph{Automatica}, vol. 43, pp. 1192–1203, 2007.
	
	
	\bibitem{yanlong2019} 
	Y. Zhao, T. Wang, and W. Bi, 
	``Consensus protocol for multi-agent systems with undirected topologies and binary-valued communications,"
	 \emph{IEEE Transactions on Automatic Control}, vol. 64, no. 1, pp. 206–221, Jan. 2019. 
	
		\bibitem{str} 
	K. J. \AA str\"{o}m and B. Wittenmark, 
	``On self tuning regulators,"
	\emph{Automatica}, vol. 9, no. 2, pp. 185-199, March. 1973.
	
		\bibitem{ljung1976} 
	L. Ljung, 
	``Consistency of the least-squares identification method," 
	\emph{IEEE Transactions on Automatic Control}, vol. 21, no. 5, pp. 779–781, Oct. 1976.
	
	\bibitem{moore1978} 
	 J. B. Moore, 
	 ``On strong consistency of least squares identification algorithm,"
	  \emph{Automatica}, vol. 14, no. 5, pp. 505–509, Sep. 1978.
	  
	  	\bibitem{chris} 
	  Christopeit N, Helmes K. 
	  ``Strong consistency of least squares estimators in linear regression models",
	   \emph{The Annals of Statistics}, vol. 8, no. 4, pp. 778-788, 1980.
	  
	  		\bibitem{lw1982}	
		T. L. Lai and  C. Z. Wei,
		``Least squares estimates in stochastic regression models with applications to identification and control of dynamic systems,''
		\emph{ The Annals of Statistic}, vol. 10, no. 1, pp. 154-166, Mar. 1982.
	  
	  \bibitem{chen1982} 
	   H. F. Chen, 
	   ``Strong consistency and convergence rate of least squares identification,"
	   \emph{Sci. Sinica}, Ser. A 25, no. 7, pp. 771–784, 1982. 
	   
	   	    \bibitem{goodwin} 
	G. C. Goodwin, P. J. Ramadge, and P. E. Caines, 
	``Discrete time stochastic adaptive control," 
	  \emph{SIAM J. on Control and Optimization}, vol. 19, no. 6, pp. 829–853, 1981.
	   
	     \bibitem{lai1986} 
	    T. L. Lai and C. Z. Wei, 
	    ``Extended least squares and their applications to adaptive control and prediction in linear systems,"
	     \emph{IEEE Transactions on Automatic Control}, vol. 31, no. 10, pp. 898–906, Oct. 1986.
	
	    \bibitem{kumar} 
	     P. R. Kumar, 
	     ``Convergence of adaptive control schemes using least squares parameter estimates,"
	     \emph{IEEE Transactions on Automatic Control}, vol. 35, no. 4, pp. 416–424, Apr. 1990.
	
		    \bibitem{g1991} 
	L. Guo and H. F. Chen, 
	``The \AA str\"{o}m-Wittenmark self-tuning regulator revised and ELS-based adaptive tracker,"
	 \emph{IEEE Transactions on Automatic Control}, vol. 36, no. 7, pp. 802–812, Jul. 1991.
	
		\bibitem{g1995}
		L. Guo,
		``Convergence and logarithm laws of self-tuning regulators,''  
		\emph{Automatica}, vol. 31, no. 3, pp. 435-450, Mar. 1995.
	
\bibitem{cg1991}
H. F. Chen and L. Guo, 
\emph{Identification and Stochastic Adaptive Control,}
Birkh{\"a}suser, Boston, 1991.
		
			 \bibitem{g1987} 
	H. F. Chen and L. Guo,
	``Optimal adaptive control and consistent parameter estimates for ARMAX model with quadratic cost",        \emph{SIAM J. on Control and Optimization}, 
	vol. 25, no. 4, pp. 845–867, 1987.
		
			\bibitem{g1996}
			L. Guo,
		``Self-convergence of weighted least-squares with applications to stochastic adaptive control",               \emph{IEEE Transactions on Automatic Control}
		 vol. 41, no. 1, pp. 79-89, Jan. 1996.
		
		\bibitem{g1999}
		T. E. Duncan, L. Guo, and B. Pasik-Duncan,
		``Adaptive continuous-time linear quadratic gaussian control", 
		\emph{IEEE Transactions on Automatic Control},
		 vol. 44, no. 9, pp. 1653-1662, Sep. 1999.
		
		\bibitem{yanlong2011}
		J. Guo, J. F. Zhang, and Y. Zhao, ``Adaptive tracking control of a class of first-order systems with binary-valued observations and time-varying thresholds," 
		\emph{IEEE Transactions on Automatic Control}, vol. 56, no. 12, pp. 2991–2996, Dec. 2011.
		
               \bibitem{yanlong2013}
		Y. Zhao, J. Guo, and J. F. Zhang, ``Adaptive tracking control of linear systems with binary-valued observations and periodic target,"
		 \emph{IEEE Transactions on Automatic Control}, vol. 58, no. 5, pp. 1293–1298, May. 2013.
	
	  \bibitem{wenxiao2023}	
	  W. X. Zhao, E. Weyer, G. Yin, D. Dong, Y. Zhang and T. Shen,  
``Adaptive Regulation of Block-Oriented Nonlinear Systems Using Binary Sensors With Applications to Automotive Engine Control",
 \emph{IEEE Transactions on Automatic Control}, vol. 68, no. 3, pp. 1369–1382, Mar. 2023.
	
		\bibitem{g2020}
		L. Guo,  
		\emph{Time-Varying Stochastic Systems: Stability and Adaptive Theory, 2}nd ed. Beijing: Science Press, 2020.
		
		\bibitem{g1990}
		L. Guo, D. W. Huang, and E. J. Hannan, ``On ARX ($\infty$) approximation."
		\emph{Journal of Multivariate Analysis} vol. 32, no. 1, pp. 17-47, 1990. 
		
		\bibitem{g2023}
		R. Dai and L. Guo, 
		``Estimation of IIR systems with binary-valued observations,"
		 \emph{Chinese Annals of Mathematics, Series B},
		vol. 44, no. 5, pp. 687-702, 2023.
		
		\bibitem{w2012}
		B. C. Cs\'{a}ji and E. Weyer, 
		``Recursive estimation of ARX systems using binary sensors with adjustable thresholds",
		 \emph{ IFAC Proceedings Volumes},
		vol. 45, no. 16, pp. 1185-1190, 2012.
		
				\bibitem{wang2023}
		Y. Wang, Y. Zhao, J. F. Zhang and J. Guo, 
		``A unified identification algorithm of FIR systems based on binary observations with time-varying thresholds".
		  \emph{Automatica}, 
		 vol. 135: 109990, Jan. 2022.
		
		\bibitem{gannals}
		D. Huang and L. Guo
		``Estimation of nonstationary ARMAX models based on the Hannan-Rissnan method".
		 \emph{The Annals of Statistics}, 
		vol. 18, no. 4, pp. 1729-1756, 1990.
		
		
		 
	\end{thebibliography}
\end{document}